\theoremstyle{definition}
\newtheorem{definition}{Definition}
\theoremstyle{definition}
\newtheorem{theorem}{Theorem}
\newcommand{\ack}{\mathsf{ack}}
\newcommand{\send}{\mathsf{send}}
\newcommand{\recv}{\mathsf{recv}}
\newcommand{\anotify}{\mathsf{anotify}}
\newcommand{\notification}{\mathsf{notification}}
\newcommand{\rcall}{\mathsf{rcall}}
\newcommand{\Pending}{\mathsf{Pending}}
\newcommand{\Completed}{\mathsf{Completed}}
\newcommand{\scope}{\nu}
\newcommand{\lock}{\mathsf{lock}}
\newcommand{\unlock}{\mathsf{unlock}}
\newcommand{\checkpoint}[1]{\overline{#1}}
\newcommand{\texec}{\mathcal{T}}
\newcommand{\True}{\mathsf{true}}
\newcommand{\False}{\mathsf{false}}
\newcommand{\elide}[1]{}
  \providecommand\BibTeX{{%
    \normalfont B\kern-0.5em{\scshape i\kern-0.25em b}\kern-0.8em\TeX}}}
\begin{document}

\title[Cross-Blockchain Atomicity]{Atomicity and Abstraction for Cross-Blockchain Interactions}

 \author{Huaixi Lu}
 \email{huaixil@princeton.edu}
 \authornote{Related IPs were developed when the authors were affiliated with Nokia Bell Labs.}
 \affiliation{
   \institution{Princeton University}
   \city{Princeton}
   \state{NJ}
   \country{USA}
   \postcode{08544}
 }

 \author{Akshay Jajoo}
 \email{ajajoo@cisco.com}
 \authornotemark[1]
 \affiliation{
   \institution{Cisco Research}
   \city{San Jose}
   \state{CA}
   \country{USA}
 }

 \author{Kedar S. Namjoshi}
 \email{kedar.namjoshi@nokia-bell-labs.com}
 \affiliation{
   \institution{Nokia Bell Labs}
   \city{Murray Hill}
   \state{NJ}
   \country{USA}
   \postcode{07974}
 }

\definecolor{red}{rgb}{1,0,0}

%\if 0
\newcommand{\commentaj}[1]{{\textit{\color{red} aj: #1}}}
\newcommand{\questionaj}[1]{{\textbf{{\color{red} aj QUESTION:} #1}}}
\newcommand{\todoaj}[1]{{\textit{TODO(AJ): #1}}}
\newcommand{\editaj}[2]{{{\color{red} aj EDIT?:} \sout{#1}{ \textbf{#2}}}}
\newcommand{\addaj}[1]{{\color{red}aj ADD?:} {\textbf {#1}}}
%\fi

\if 0
\newcommand{\commentaj}[1]{}
\newcommand{\questionaj}[1]{}
\newcommand{\todoaj}[1]{}
\newcommand{\editaj}[2]{}
\newcommand{\addaj}[1]{}
\fi

\newcommand{\knote}[1]{{\color{orange}{[K]#1}}}
\newcommand{\hnote}[1]{{\color{blue}{[H]#1}}}

\begin{abstract}
  A blockchain facilitates secure and atomic transactions between mutually untrusting parties on that chain.   
Today, there are multiple blockchains with differing interfaces and security properties. 
Programming in this multi-blockchain world is hindered by the lack of general and convenient abstractions for cross-chain communication and computation. Current cross-chain communication bridges have varied and low-level interfaces, making it difficult to develop portable applications. Current methods for multi-chain atomic transactions are limited in scope to cryptocurrency swaps. 

This work addresses these issues. We first define a uniform, high-level interface for communication between chains. Building on this interface, we formulate a protocol that guarantees atomicity for general transactions whose operations may span several chains. We formulate and prove the desired correctness and security properties of these protocols. 
Our prototype implementation is built using the LayerZero cross-chain bridge. Experience with this implementation shows that the new abstractions considerably simplify the design and implementation of multi-chain transactions. Experimental evaluation with multi-chain swap transactions demonstrates performance comparable to that of custom-built implementations.

\end{abstract}

% \begin{CCSXML}
% <ccs2012>
%    <concept>
%        <concept_id>10002978.10002986.10002987</concept_id>
%        <concept_desc>Security and privacy~Trust frameworks</concept_desc>
%        <concept_significance>500</concept_significance>
%        </concept>
%    <concept>
%        <concept_id>10002978.10003006.10003013</concept_id>
%        <concept_desc>Security and privacy~Distributed systems security</concept_desc>
%        <concept_significance>500</concept_significance>
%        </concept>
%    <concept>
%        <concept_id>10003033.10003039.10003041.10003043</concept_id>
%        <concept_desc>Networks~Formal specifications</concept_desc>
%        <concept_significance>500</concept_significance>
%        </concept>
%  </ccs2012>
% \end{CCSXML}

% \ccsdesc[500]{Security and privacy~Trust frameworks}
% \ccsdesc[500]{Security and privacy~Distributed systems security}
% \ccsdesc[500]{Networks~Formal specifications}

% \keywords{cross-chain interoperability, smart contracts, formal analysis, atomicity, protocol}

\maketitle
\section{Introduction} \label{Introduction}

A blockchain provides the abstraction of an immutable, open, and fault-tolerant ledger. Several blockchains with a variety of features have been designed and implemented since the introduction of the Bitcoin chain~\cite{nakamoto2008bitcoin}. The rich variety comes at a cost: information is  spread across chains, complicating the development of applications. 

From the programmer's viewpoint, there is a lack of effective abstractions for inter-chain communication and computation. Current "cross-chain bridge'' mechanisms facilitate basic communication between two chains but expose varied and low-level interfaces, making it challenging to write easily portable applications. Smart contract transactions execute atomically on a single chain, i.e., with ``all or nothing'' semantics. However, distinct chains operate concurrently and independently of one another; hence, atomicity does not extend across  chains and must be enforced by a protocol. Currently, atomic cross-chain protocols are known only for cryptocurrency swaps: a useful application but with a narrow scope. 

There is thus a need for convenient, broadly applicable programming abstractions for computation and communication across blockchains. In this work, we design two such mechanisms: (1) a uniform interface for pairwise blockchain communication, and (2) a protocol that ensures atomicity for transactions whose operations may span several chains. 

The interface provides three commonly-needed communication functionalities: notification (sending a message from one chain to another); notification with acknowledgement of receipt; and a remote procedure call. The central advantage of this interface is portability: smart contracts written to this interface execute without modification on different chains connected by different bridges, much as the standard \texttt{socket} library~\cite{gilligan1999basic} of Unix allows programs to be ported to a variety of network types and interfaces. We develop an architecture for realizing these functionalities and instantiate it over the LayerZero and IBC cross-chain bridges. The  interface also simplifies programming, eliminating bugs such as one that we discovered in LayerZero's own example PingPong application. 

As information spreads out over multiple chains, there is a need for transactions whose operations span several chains. The resulting technical challenge, which we address in this work, is to provide a mechanism that guarantees atomic and trustworthy execution. %On a single chain, transactions are atomic thanks to the consensus mechanism, which ensures that a transaction is committed only if a majority \knote{majority is correct?} \anote{I agree, I don't think we can and should say "Majority". I am making an attempt at rewriting it.} of the participants agree on its validity. 
%%On a single chain, its consensus mechanism guarantees the atomicity of transactions. 
%are atomicthis is guaranteed by the underlying well defined consensus mechanism. The consensus mechanism ensures that a transaction meets a set of pre-defined requirements, } \hnote{such as a certain number of agreements within participants, } 
%\addaj{for it to come into effect. Once a transaction comes into effect all nodes accept it religiously.}
%%However, multiple chains  operate concurrently and independently of each other and therefore atomicity must be imposed through a protocol. This task is  complicated by differences in block formats and consensus algorithms. 
%are not synchronized: their executions are independent and concurrent
%work independently and are concurrently active \addaj{and they have their own consensus algorithms}. Despite the trust-less decentralized setting, the trustworthiness of a transaction on a single chain is ensured (with high probability) through replication and consensus. That is, once a transaction is committed on the blockchain, it is guaranteed with high probability that it was correctly executed. However, there is no such guarantee across chains.

We design a protocol for atomic execution of general cross-chain transactions. This protocol is inspired by the classic two-phase commit protocols in distributed databases, and operates over the uniform communication interface. We prove atomicity and trustworthiness; I.e., we show that the protocol cannot be disrupted by malicious behavior by other parties. 
Trustworthiness relies on a secure-transfer property of cross-chain bridges, as well as the presence of trusted contracts that help execute remote operations. 

We have developed a prototype implementation (in Solidity) using the LayerZero and IBC bridges. We have deployed this implementation on testnets linked by LayerZero bridges and used the atomicity protocol to implement and successfully test pairwise and multi-way cryptocurrency swaps. Multi-chain cryptocurrency swaps are easily expressed in the general model, requiring only $10-20$ lines of code to define a transaction.

To summarize, this work makes the following  contributions:
\begin{enumerate}
    \item We define a uniform interface for inter-blockchain communication, ensuring portability of applications across a variety of cross-chain bridges. 
    
    \item We define a protocol for atomic execution of general, cross-chain transactions. 
    
    \item We show the correctness and trustworthiness of these protocols, under clearly stated assumptions on blockchain and bridge computation.
    
    \item Experience with programming multi-chain swap transactions shows that the abstractions considerably simplify the design of transactions, while the prototype implementation has performance that is comparable to that of custom-built implementations. 
\end{enumerate}

This paper is organized as follows. We begin with relevant background on blockchain and blockchain interoperability. This is followed by the design of the uniform communication interface  and the multi-chain atomicity protocol, supported by formal modeling and proofs of correctness and trustworthiness. We then describe an evaluation  on case studies that utilize  LayerZero as the low-level bridge communication between chains. Finally, we discuss related work and draw conclusions along with suggestions for future work.

\section{Background}

\subsection{Blockchain Interoperability}
A blockchain system is a distributed ledger that maintains a continuously growing history of unalterable ordered information. The information is organized in a  chain of discrete \emph{blocks}. The potential benefits of the blockchain are more than just economic-they extend into political, humanitarian and scientific domains~\cite{swan2015blockchain}. %chronological [kedar?]

Smart contracts are digital agreements coded and stored within the blockchain~\cite{zheng2020smartcontract}. Unlike 
conventional paper-based contracts, smart contracts encode terms and conditions through software. In this inherently decentralized, trustless setting, the immutability of contract code and openness of contract execution (both ensured by the underlying blockchain) develop trust between the parties to a contract. The blockchain mechanism ensures that every smart contract transaction is atomic: i.e., it either completes successfully or (on failure) leaves the blockchain state unchanged. 

% ensuring 
% their execution aligns with the consensus mechanism inherent to the blockchain. One action defined in the smart contract refers to one transaction that can be recorded as a piece of information in a single block.

A wide range of blockchains with diverse use-cases and diverse consensus protocols have been deployed 
%\addaj
{and are in active use}. 
For example, Ethereum supports smart contracts and uses Proof of Stake (PoS)~\cite{nguyen2019pos} while Bitcoin is built as a cryptocurrency on Proof of Work (PoW)~\cite{gervais2016pow}. 
%\editaj{This diversity has precipitated a significant surge in the heterogeneity of the blockchain ecosystem.}
%{When catering such diverse use cases the underlying ecosystem gets heterogeneous.}
Looking forward, %\editaj{it is highly probable that we will see the emergence of}
{this heterogeneity will only increase and will lead to} a 'multi-chain universe'—a digital environment where data is spread across blockchains and applications combine operations that execute on multiple chains. 
%wherein 
%\editaj{disparate blockchain applications can seamlessly communicate, which is known as blockchain interoperability.}
%{applications across disparate blockchains will seamlessly communicate.}

\subsection{Cross-Chain Bridges} \label{sec_ccc_bridge}
To facilitate easy, secure, and accurate interactions between distinct blockchains, several Cross-Chain Communication (CCC) bridges have been established. A fundamental feature of these bridges is the \textit{Secure Transfer} property, which is described in many works such as~\cite{xie2022zkbridge} and formalized later in §III. This property requires that when a receiver chain receives a notification of an event on the sender chain, this event has actually been 
%authenticate that a specific event has indeed been
%{ensure that the event under consideration has been}
committed on the sender chain and cannot be rolled back.
Without the \textit{Secure Transfer} property, the sender chain may create a fake message that is not committed on the sender chain. As a result, a user on the receiver chain would 
lose some assets without a proper counter transaction on the sender chain, thereby creating a fraudulent transaction.
%reaping an unjustified benefit.

Figure~\ref{fig:cc_bridge} demonstrates the general process for transmitting messages via a CCC bridge.
\begin{figure}
    \centering
    \includegraphics[width=0.88\linewidth]{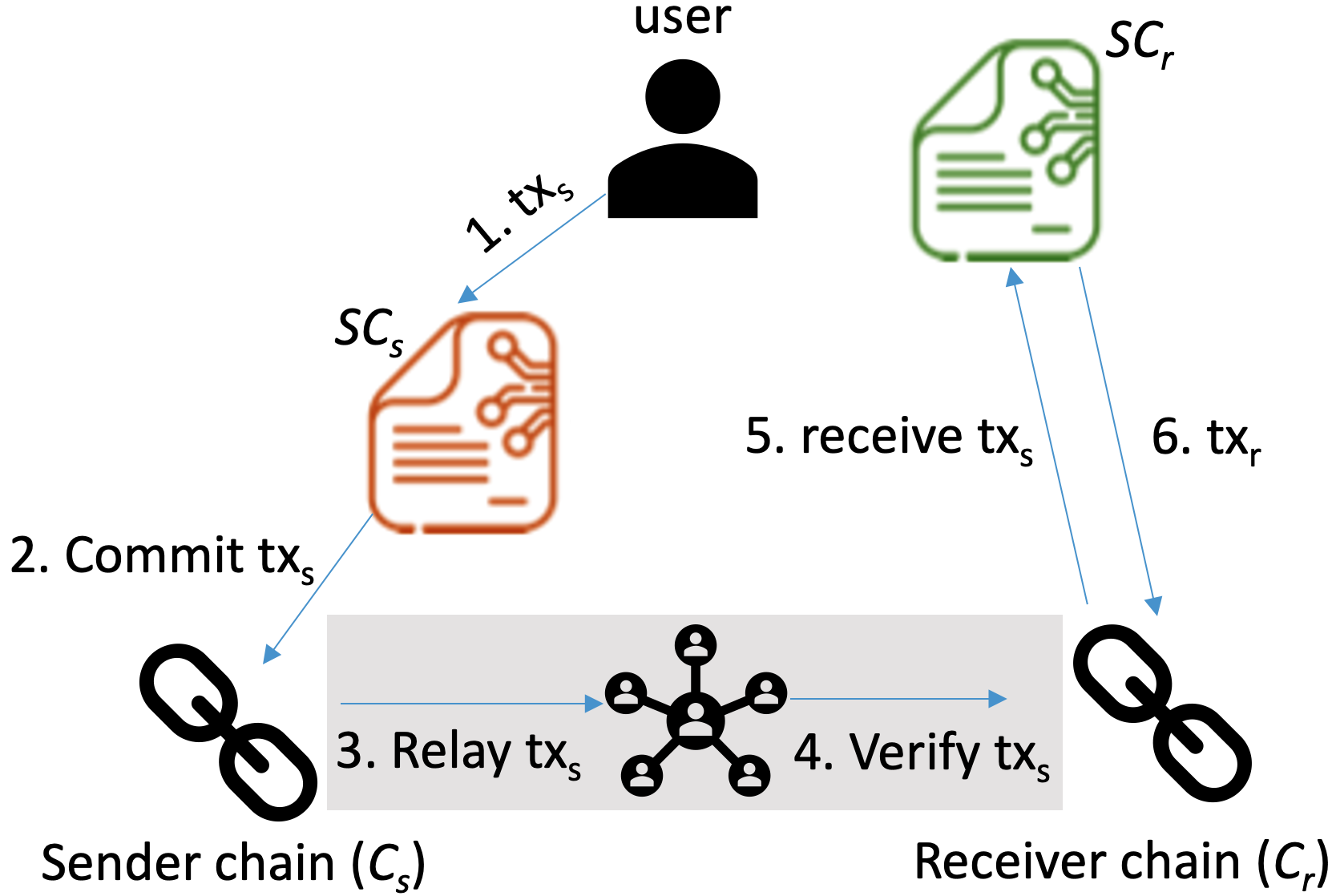}
      \caption{\label{fig:cc_bridge} \textbf{The communication flow through a CCC bridge}}

\end{figure}
Smart contracts $SC_s$ and $SC_r$ are deployed on blockchains $C_s$ (sender) and $C_r$ (receiver), respectively. Note that a single bridge is usually designed for one direction. For a user to activate a transaction $tx_r$ on blockchain $C_r$, they must first initiate transaction $tx_s$ on blockchain $C_s$.
(The remote execution may be motivated by superior features on $C_r$, such as lower fees, prompting the user to transfer assets from $C_s$ to $C _r$.)
The secure transfer property is guaranteed by the middle CCC bridge (symbolized by the shaded area in 
Figure~\ref{fig:cc_bridge}). The bridge relays the details of $tx_s$ and confirms its existence on chain $C_s$ before transmitting it to chain $C_r$. 
The above communication flow can be summarized as: 
1) User initiates $tx_s$ in smart contract $SC_s$. 
2) $tx_s$ undergoes the consensus protocol of chain $C_s$ and is subsequently committed. 
3) The CCC bridge picks up the message including $tx_s$ and sends it to the receiver chain $C_r$, facilitated by one or multiple relayers. 
4) The CCC bridge validates $tx_s$ by examining the associated verified block-header. 
5) $tx_s$ is confirmed, and its details are available on Chain $C_r$. 
6) Consequently, smart contract $SC_r$ issues $tx_r$ accordingly which is then committed on $C_r$.

%CCC bridges can be categorised into two distinct types, namely ``trusted" and ``trustless", based on their varying trust assumptions and security models to verify the transaction from the sender chain.
Based on their trust assumptions and security models to verify the transaction from the sender chain we can categorize the CCC bridges in the following two categories:
\begin{itemize}
    \item \textbf{Trusted:} The ``trusted" CCC bridges~\cite{zarick2021layerzero}, introduce a third-party trust system, such as an Oracle network or a notary committee, to authenticate and verify the occurrence of specific events on an alternative chain.
    \item \textbf{Trustless:} The ``trustless" CCC bridges, e.g. Cosmos inter-blockchain communication protocol (IBC)~\cite{goes2020interblockchain} and zkbridge~\cite{xie2022zkbridge}, on the other hand, employ a `light client' approach, where one chain maintains an up-to-date version of the other chain's block header. For instance, a light client of the sender chain would be implemented within the receiver chain. This method allows the receiver chain to acquire and update block header information from the sender chain, thereby verifying the occurrence of a specific event on the sender chain through a cryptographic (e.g., Merkle proof) method.
\end{itemize}
However, it should be noted that these CCC bridges primarily focus on the trust layer of the general communication between two blockchains, aiming to maintain the secure transfer property. In contrast, certain application-specific blockchain interoperability protocols, such as the cross-chain atomic swap enabled by Hash Time-locked Contract (HTLC)~\cite{herlihy2018atomic}, concentrate solely on a distinct communication domain between the two chains.

Given the vast variety of CCC bridges, there have been efforts in prior research to combine or standardize all these bridges such as Multi-Message Aggregation (MMA)~\cite{2023mma}. While these approaches tend to blur the boundaries among different bridges, they work at the same low-level as cross-chain bridges. Their main goal is to augment security based on existing bridges, rather than provide a higher-level abstraction of these low-level bridges to enhance better smart contract programmability and portability. 
The increasing complexity of emerging multi-chain applications, typically constructed on top of these low-level bridges, underscores the challenges in design and verification for correctness. A more abstract representation for these low-level bridges is indeed essential.

\subsection{Atomicity for Multi-chain Applications}
%\knote{Reword to bring out how atomicity or multi-chain operations are done today and point out drawbacks}
%In a single blockchain, atomicity is achieved via the consensus protocol, which ensures that one transaction consisting of a set of operations is either fully committed to a new block or entirely rejected from being appended to the chain

The fundamental property of the consensus protocols in a blockchain is that a \emph{block} is either committed or discarded. This property can be leveraged to achieve atomicity across a set of operations. If we embed all the operations under consideration in a same block then all of them will be committed or all of them will be rejected; \textit{i.e.} the set of operations will be executed atomically.

Informally speaking, a cross-chain transaction is a sequence of operations where the operations may target different chains. For instance, a cross-chain transaction $T$ may be expressed as the sequence $[c_0;d_0;c_1]$, where $c_0,c_1$ are operations on a chain $C$ and $d_0$ is an operation on a different chain $D$. Atomic execution of such a transaction is not guaranteed. That is because each blockchain operates independently of the others, without a common consensus mechanism. Thus, the sequence of operations in transaction $T$ has no built-in protection against interference from other operations on those chains, which may be initiated independently at any time by other parties. (For example, an operation $c_2$ may occur on chain $C$ between the occurrences of $c_0$ and $c_1$, changing the state of blockchain $C$ in a way that cannot be viewed as the result of an atomic execution -- i.e., as the result of $(c_0;d_0;c_1);c_2$ or $c_2;(c_0;d_0;c_1)$.) It is for this fundamental reason that achieving cross-chain atomicity requires a special protocol.

Existing protocols for atomic cross-chain transactions are limited in scope: e.g., pairwise token transfers through gateways~\cite{augusto2023multi} or pairwise asset exchanges through atomic swap protocols~\cite{ding2022lilac, zakhary13atomic}. HTLC~\cite{herlihy2018atomic} is the most popular solution for ensuring atomicity of a pairwise cryptocurrency swap. It has been extended to multiple chains~\cite{xue2023fault}, but remains specialized to currency swaps, an important but narrowly defined transaction. As multi-chain applications evolve to handle data spread across several chains, there is a need for an atomicity protocol that handles general transactions that contain arbitrary smart-contract operations defined on multiple chains. That is the fundamental question addressed in this paper. 

%%However, when it comes to cross-chain communication, atomicity is not always guaranteed. This is because each blockchain operates independently, without a unified consensus across distinct chains. Most interoperability solutions to date have concentrated on cross-chain communication involving just two parties or are limited to token transfers through gateways~\cite{augusto2023multi} and asset exchanges through atomic swap protocols~\cite{ding2022lilac, zakhary13atomic}. As multi-chain applications evolve, the need for an atomicity mechanism for general cross-chain transactions 
%reaching a consensus across all blockchains for atomicity 
%challenge.

%A top level consensus for blockchains are usually eveloped for multi-chain communication. 
Some systems use a top-level chain to periodically checkpoint the state of other chains, or to record transactions occurring on other chains in a conveniently accessible location. This is the case, for instance, for Trustboost~\cite{wang2022trustboost}, Hyperservice~\cite{liu2019hyperservice}, and Polkadot~\cite{wood2016polkadot}. This does not, however, suffice to guard against interference, and thus does not provide any guarantee of atomicity. 

%HyperService proposes a programming language for developing the decentralized applications. It aims to provide better programmability, which allows the developer to write all transactions for different blockchains in a single script. It generates a blockchain to connect different blockchains, but in this case, all the cross-chain communications are built on the trust for the single blockchain.

\section{Threat Model}\label{sec:threats}

We describe and discuss our threat model. We first present our correctness and security goals. This is followed by listing the attacker’s capabilities which may compromise those goals. Finally, we formulate the assumptions on which the trustworthiness claims are based. 

The correctness objective of the communication interface is defined by the semantics of each high-level operation. Consider the notification operation. The desired safety (security) property is that a notification message received by a contract  $d$ on a chain $D$ that is claimed to have been sent by a contract $c$ on a chain $C$ must have actually been sent by contract $c$: i.e., notifications cannot be forged. There is also a liveness requirement of eventual delivery: if contract $c$ on chain $C$ sends a notification to contract $d$ on chain $D$, that notification must eventually be delivered to $d$. Similar safety and liveness properties apply to the other operations: notification with acknowledgement and remote call.

The correctness objective of the atomicity protocol can also be divided into safety and liveness objectives. The safety objective is that of atomicity, which can be expressed as strict serializability: roughly, that on every computation, the individual atomic operations of transactions may be reordered so that the transactions do not overlap (i.e., appear to execute atomically), with transactions that do not originally overlap retaining their temporal ordering (i.e., if, in the original computation, a transaction $T_1$ completes before a transaction $T_2$ is started, this temporal relation must be preserved after reordering). In terms of the closely related concept of linearizability, every transaction should appear to occur instantaneously at a singular point in time, falling between its start and end points. The liveness objective is that the execution of every transaction should eventually terminate (either successfully or with failure). There is a secondary safety objective, which is to limit the number of failing transactions.

An attacker's goal is to disrupt these correctness properties: that is, either to interfere with transaction execution so as to violate atomicity (i.e., make it impossible to reorder transactions correctly), or to interfere in such a way that an initiated transaction never completes, or to interfere by causing transactions to fail (i.e., create a denial of service attack). 
%%In this work, we identify the attacker's primary goal as disrupting the ongoing cross-chain transaction within a reasonable amount of time. This disruption aims to alter the outcome of the correctly executed cross-chain transaction (specified in Figure~\ref{fig:tx_spec}).

The capabilities attributed to an attacker include:
\begin{itemize}
    \item The ability to invoke any methods in a smart contract, as long as the attacker has the required permissions.
    \item The freedom to execute these methods at any point during a cross-chain transaction. 
    \item The flexibility to assume any role in the transaction process, be it a participant on a blockchain, or a relayer in a cross-chain bridge. 
\end{itemize}
However, it is assumed that the attacker {\bf cannot} steal participants' private keys or arbitrarily alter the state of a blockchain ledger, e.g., by compromising the underlying consensus mechanism.

We assume that the operation of a blockchain and its smart contracts cannot be compromised. That is, a smart contract executes precisely according to its definition and cannot halt due to machine failure. This assumption is necessary as our protocols rely on  auxiliary smart contracts. The protocols also rely on cross-chain bridges, which we assume  meet the secure transfer property that is discussed in Section~\ref{sec_ccc_bridge}. The secure transfer property must hold regardless of the particulars of the cross-chain bridge protocol, which differs significantly betweeen bridge implementations such as zk-bridge~\cite{xie2022zkbridge}, IBC by Cosmos~\cite{goes2020interblockchain} and LayerZero~\cite{zarick2021layerzero}.

\section{An Abstract Communication Interface}

We present a high-level interface that a pair of blockchains can use to communicate over a cross-chain bridge. As sketched in the \S\ref{Introduction}, the primary objective of this interface is portability: application software written to this interface can be easily ported from one cross-chain bridge to another and 
%(subject to native language compatibility \commentaj{will writing this point here weaken our case?}) 
(with native bytecode compatibility) from one blockchain to another. That is not the case for applications that are written to the interface of a specific cross-chain bridge, as the application is then locked to this interface, and the interfaces are incompatible with each other. We show how our high-level interface can be implemented atop the lower-level mechanisms provided by current bridges such as LayerZero.

The interface consists of three common communication functions: notify, notify with acknowledgment, and remote call. We describe each in turn. Consider blockchains $C$ and $D$ 
%%\commentaj{(Why writing it as chain $C$ and $D$? Why not $A$ and $B$? After reading $C$ and $D$. I immediately start searching for $A$ and $B$ also.)} 
% [kedar] C for chain, D for different chain? Good point, but too much work to modify it now.
that are linked bidirectionally by cross-chain bridges. The notify function is used by a contract $c$ on chain $C$ to send a message $x$ to a contract $d$ on chain $D$. The notify-with-acknowledgement functionality is used by contract $c$ to send a message $x$ to contract $d$ \emph{and} receive an acknowledgement that this message has been delivered successfully. The remote-call functionality is used by contract $c$ to invoke a method $m$ of contract $d$ with parameters $x$ and receive an acknowledgement that this invocation was performed successfully along with values produced as a result of the invocation.

These functionalities are implemented in an \emph{asynchronous} manner, which allows contract $c$ to continue execution without waiting for the completion of the communication process. As with other asynchronous interfaces, we provide a mechanism for contract $c$ to test whether the communication is complete and to retrieve any results. It is of course easy to implement synchronous forms of communication given these capabilities. Asynchrony enables $c$ to issue remote operations on multiple chains in parallel, a capability that is used in the atomicity protocol, described in Section~\ref{Atomicity}. 

% source chain, destination chain
% sending/invoking contract, receiving contract
% action, invocation, message

We have so far used the phrase ``cross-chain bridge'' informally. Formally, a \emph{cross-chain bridge} (``bridge'' for short) from chain $C$ to chain $D$ is a communication service that is external to either chain. We abstract from specific implementations and simply assert that a contract $c$ on chain $C$ can request the bridge to transfer a message $m$ to a contract $d$ on chain $D$ through an invocation $\send(m,d)$. This invocation is recorded on chain $C$. At the other end, a bridge delivers message $m$ to contract $d$ by invoking a $\recv(m,k)$ action on $d$, where $k$ is a non-negative number.

%%. Internally, a bridge transfers a message $m$ to contract $d$ as a tuple $(m,d,k)$ where $k$ is an non-negative number $k$; we refer to this tuple as an indexed message. 

Given the trustless setting of Web3, the bridge service cannot be implicitly trusted to deliver messages and to do so without any corruption. Hence, we must make trust assumptions on the behavior of the bridge. We refer to those assumptions as the \emph{Secure Transfer} property of the bridge. The formalized formulation is adapted from the system model presented in~\cite{xie2022zkbridge}. In practice, the assumptions are enforced (with high probability) through cryptographic techniques such as zero-knowledge proofs.

\begin{definition}[Secure Transfer](adapted from the model in ~\cite{xie2022zkbridge})
  A bridge connecting chain $C$ to chain $D$ satisfies the secure transfer property if it meets the following conditions:
  \begin{description}[leftmargin=*]
  \item[{\bf Safety:}]   Every message $m$ that is delivered to a contract $d$ on chain $D$ was indeed sent from chain $C$ -- i.e., messages cannot be forged by the bridge. This may be formally defined as:
    \[ \forall m,k,d,i : d.\mathrm{recv}(m,k) \in D_i \Rightarrow c.\mathrm{send}(m,d) \in C_k
    \]
    Precisely, for every message $m$ and number $k$ such that there is a $\recv(m,k)$ invocation by contract $d$ on $D_i$, the $i$'th block of $D$: (1) $k$ is the index of a stable block on chain $C$ and (2) $\send(m,d)$ is an invocation from contract $c$ on chain $C$ to the bridge that is recorded on the $k$'th block of $C$, denoted as $C_k$.
    
  \item[{\bf Liveness:}] Every message $m$ sent to a valid contract $d$ on chain $D$ is eventually delivered to $d$, which can be formally described as: 
  \[ \forall m,k,d : c.\mathrm{send}(m,d) \in C_k \Rightarrow \exists i : d.\mathrm{recv}(m,k) \in D_i \]
    
  Precisely, for every message $m$ and every invocation $\send(m,d)$ from a contract $c$ on chain $C$ to the bridge, there is eventually an invocation $\recv(m,k)$ on contract $d$ in chain $D$ (recorded in $D_i$, the $i$'th block of $D$), where $k$ is the index of the block on chain $C$ (denoted as $C_k$) that records the $\send(m,d)$ invocation.

  Note that messages could be delivered out of order. I.e., it is possible for the send of $m_1$ to be recorded on block $k_1$ and that for $m_2$ to recorded on block $k_2$ where $k_1 < k_2$, while the corresponding deliveries are made at blocks $i_1,i_2$ where $i_1 \geq i_2$. 

  \end{description}

\end{definition}

%% \begin{enumerate}
%% \item Notify. An asynchronous notification message conveys data $x$ from a sending contract $c$ on $C$ to a receiving contract $d$ on bridge $D$. 

%% \item Notify with acknowledgement. This is an asynchronous notification action that returns a ``future'' containing an acknowledgement for the notification receipt. This future can be blocked upon or inspected by the sending contract. 

%% \item Remote call. This is an asynchronous invocation of a method in a contract on chain $D$, which returns a future containing the results of the method invocation.
%% \end{enumerate}

%% The remote call and notification with acknowledgement are made asynchonous for increased flexibility in programming. Asynchrony allows multiple notifications to be in progress at a time and it also allows multiple remote calls (possibly to several distinct chains) to be in progress concurrently. We now present the detailed protocol for each interface action.

We now define the protocols used to implement these high level functions. The protocols require  \emph{adapter} smart contracts: contract $p$ (on chain $C$) and contract $q$ (on chain $D$). These adapters act as intermediaries between the source contract $c$ and the bridge, as well as between the bridge and the destination contract $d$, playing three critical roles. First, they provide the desired abstraction layer that insulates source and destination contracts from the low-level details of the bridge interface. Second, the adaptors ensure that notifications are delivered to and remote methods are invoked on the destination contract. And third, the adaptors track the progress of the asynchronous communications operations by, for instance, associating internal fresh identifiers (e.g., sequence numbers) with operations and responding to queries from the source contract.

To limit duplication, we present the protocols for notify-with-acknowledgement and remote-call. The notify protocol is a sub-protocol, constituted by the initial steps of the notify with acknowledgment protocol. For simplicity of notation, we drop the block index from the $\recv(m,k)$ invocation, representing it simply as $\recv(m)$.

\subsection{Notify with acknowledgement}

\begin{figure*}
 \vspace{-30pt}
  \centering
  \includegraphics[scale=0.55]{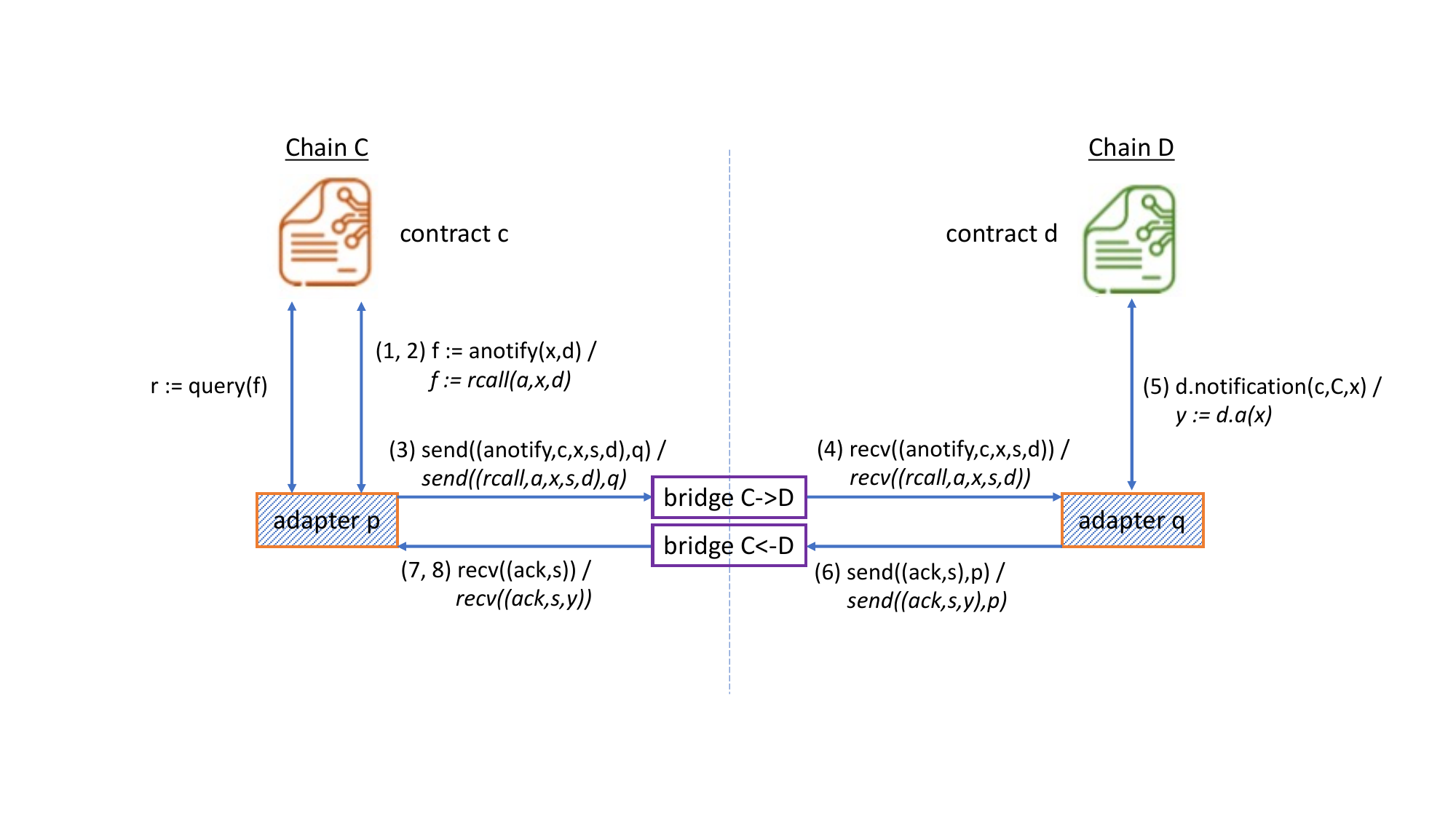}
  \vspace{-60pt}
  \caption{\bf A Combined Flow Diagram for the Notify-with-Acknowledgement/Remote Call Protocol.
%  \ajcomment{Change the "bridge D->C" to "bridge C<-D"}
}
  \label{fig:notify-with-ack-remote-call}
\end{figure*}

This is a notification that also provides an acknowledgement of receipt. In order to generate the acknowledgement, we need a bridge in the reverse direction, from $D$ to $C$. Acknowledgements are provided asynchronously; thus, it is necessary to mark messages with sequence numbers to distinguish those that have been acknowledged. The protocol is illustrated in Figure~\ref{fig:notify-with-ack-remote-call} along with the Remote Call Protocol and consists of the following sequence of steps. 

\begin{enumerate}
\item Contract $c$ invokes $\anotify(x,d)$ on the adapter contract $p$ on chain $C$, where $x$ is the data to be notified to contract $d$ on chain $D$.
\item Adapter $p$ assigns a fresh identifier $s$ and a  ``future'' $f$ (with state \textsf{Pending}) to this invocation, and returns $f$ to contract $c$. Contract $c$ can check the status of the transmission at any later point by querying the future $f$ through $r:= query(f)$. (The adapter can also provide a facility for contract $c$ to be notified when the future changes state from \textsf{Pending} to \textsf{Delivered}. We do not illustrate this in the figure.)
  
\item Adapter $p$ invokes $\send(m=(\anotify,c,x,s,d),q)$ on the bridge from $C$ to $D$, where $q$ is its corresponding adapter contract on chain $D$. 
\item The bridge communicates the message $m$ to contract $q$ through its $\recv$ method. 
\item Adapter $q$ parses $m$ as $(\anotify,c,x,s,d)$ and notifies contract $d$ by invoking its $\notification$ method with parameters $c$, $C$ and $x$.
\item Adapter $q$ then prepares a message $m'=(\ack,s)$ and sends it to the bridge for the reverse direction from $D$ to $C$ through $\send(m',p)$.
\item The bridge communicates the message $m'$ to contract $p$ through its $\recv$ method.
\item Adapter $p$ parses $m'$ as $(\ack,s)$ and updates the state of the future $f$ associated with $s$ to \textsf{Delivered}.
\end{enumerate}

At any point, contract $c$ may query the future $f$ for the state of the notification. That will be \textsf{Pending} until the final step, when the state changes to \textsf{Delivered} and is then unchanged.

The notify protocol is a simplification and shortening of this protocol. It consists of steps 1-5 of this protocol, i.e., without the additional acknowledgement message transfer. There is also no need to have a future $f$ and fresh identifier $s$ as those are used exclusively for the acknowledgement step. 

\begin{theorem} \label{thm:notify-ack}
    The protocol ensures that notifications cannot be forged (safety) and are eventually delivered (liveness). 
\end{theorem}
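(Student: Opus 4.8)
The plan is to reduce the two high-level guarantees to the Safety and Liveness clauses of the Secure Transfer property of the two underlying bridges (the forward bridge from $C$ to $D$ and the reverse bridge from $D$ to $C$), together with the threat-model assumption that the adapter contracts $p$ and $q$ execute exactly according to their definitions and cannot halt. The overall shape of the argument is a step-by-step chaining through the eight protocol steps, backwards for safety and forwards for liveness.

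For \textbf{safety} (no forgery), I would trace the protocol backwards from the observable event. Suppose adapter $q$ invokes the $\notification$ method on $d$ with claimed sender $c$, chain $C$, and data $x$ (step 5). Since $q$ is faithful, it issues this call only after receiving and parsing a message $m=(\anotify,c,x,s,d)$ through its $\recv$ method (step 4). By the Safety clause of Secure Transfer for the forward bridge, a delivered $m$ must have been genuinely sent: there is a recorded $\send(m,q)$ invocation on some block $C_k$. The only producer of such a $\send$ is adapter $p$ in step 3, and $p$ (again faithful) issues it only in response to a prior $\anotify(x,d)$ invocation by $c$ in step 1, with the fields of $m$ copied from that invocation. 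Hence $c$ really did request the notification of $x$ to $d$. I would then extend the same backward-tracing to the acknowledgement: the future $f$ transitions to $\Delivered$ (step 8) only when $p$ receives $m'=(\ack,s)$; by Safety of the reverse bridge this $m'$ was actually sent by $q$ in step 6, which $q$ performs only after having notified $d$ in step 5. Thus $\Delivered$ truthfully reports that $d$ was notified.

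For \textbf{liveness}, I would trace the protocol forwards. Given an $\anotify(x,d)$ call by $c$ (step 1), faithfulness of $p$ guarantees the assignment of a fresh $s$ and future $f$ and the invocation $\send(m,q)$ (step 3). The Liveness clause of Secure Transfer for the forward bridge then guarantees that $m$ is eventually delivered to $q$ (step 4); $q$'s faithfulness yields the $\notification$ call on $d$ (step 5) and the reverse $\send(m',p)$ (step 6). Applying the Liveness clause to the reverse bridge gives eventual delivery of $m'$ to $p$ (step 7), after which $p$ sets $f$ to $\Delivered$ (step 8). So every issued notification is eventually delivered and eventually acknowledged.

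The main obstacle is the correct pairing of acknowledgements to notifications when several are in flight, because the Liveness clause explicitly permits out-of-order delivery. I would handle this by leaning on the freshness of the identifier $s$: each $\anotify$ receives a distinct $s$, the forward message carries $s$ to $q$, $q$ echoes exactly this $s$ in $m'=(\ack,s)$, and $p$ uses $s$ to index the unique future $f$ it must update. Establishing this round-trip invariant---that the $s$ returned in any acknowledgement equals the $s$ assigned at the matching $\anotify$, and that distinct invocations never collide on $s$---is the crux of the argument; everything else is routine chaining through the bridge guarantees. A minor caveat worth noting is that Secure Transfer as stated bounds neither duplication nor spurious extra deliveries, so if needed I would either assume at-most-once delivery or observe that re-delivery of a genuine $m$ still only re-notifies with the true sender, and so does not compromise the no-forgery claim.
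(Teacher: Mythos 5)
Your proposal is correct and follows essentially the same route as the paper's own proof: reduce both properties to the Secure Transfer clauses of the two bridges plus the assumption that chains and the adapter contracts $p,q$ execute faithfully, chaining forwards through the protocol steps for liveness and backwards for safety, with the acknowledgement handled by the mirror-image argument over the reverse bridge. Your additional points---the round-trip invariant on the fresh identifier $s$ for matching acknowledgements to futures, and the caveat about duplicate deliveries---are refinements the paper's proof leaves implicit, but they do not change the underlying argument.
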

\begin{proof} 
The proof relies on the assumptions that the working of a blockchain and the behavior of a contract cannot be tampered with, and on the secure transfer property of the bridge. 

  (Liveness) Consider data $x$ that is sent from contract $c$ to contract $d$. By the working of chain $C$, the adapter contract $p$ will eventually execute the $\anotify$ call with the message $m$ as defined. From the liveness property of Secure Transfer, message $m$ will be eventually delivered to contract $q$. By the working of chain $D$, the $\notification$ call will be executed on contract $d$, delivering the notification for the data $x$. 
  
  (Safety) Now consider a notification delivered to $d$ from the adapter $q$. From the definition of the adapters $p$ and $q$ and the soundness of Secure Transfer,  the notification message must originate from a valid contract on chain $C$ and has not been tampered with in transit. 
  
  A similar argument in reverse establishes the security and eventual delivery of the acknowledgement message.
\end{proof}

This is an asynchronous protocol, in that multiple unacknowledged notifications may be issued by the contract $c$. The identifier $s$ is used to track the state of each notification, which is reflected in the future $f$. The smart contract program $c$ may inspect and make decisions based on whether an issued notification has been acknowledged; for instance, blocking any other transactions until a specific notification is acknowledged. As a special case, it is easy to implement synchronous versions of notify and notify-with-acknowledgement.

\subsection{Remote Call}

%\begin{figure*}
%  \centering
%  \includegraphics[scale=0.55]{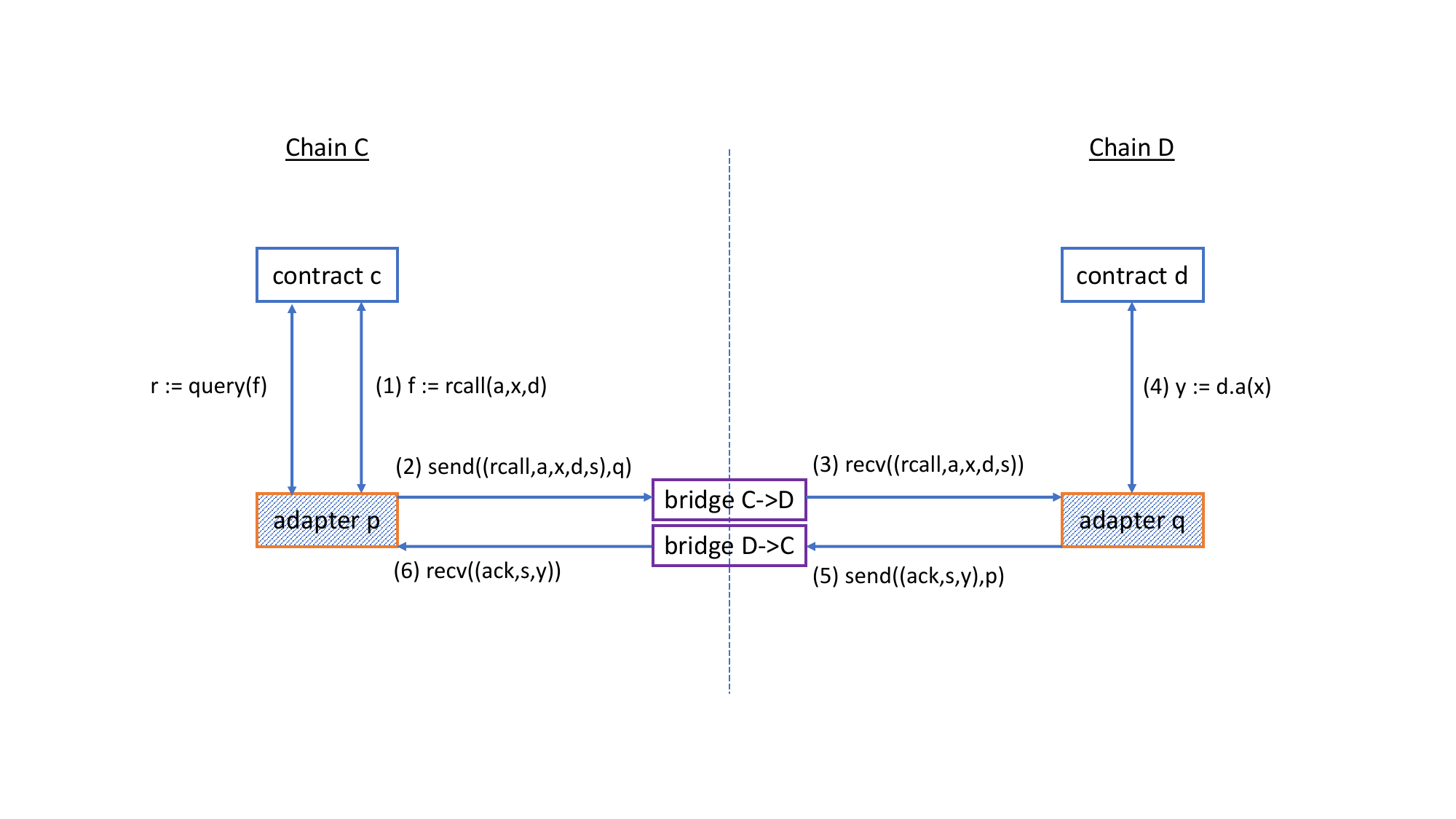}
%  \caption{Flow Diagram for the Remote Call Protocol.}
%  \label{fig:remote-call}
%\end{figure*}

A remote call invokes a particular method on the target contract. The protocol extends the one for notify-with-acknowledgement by carrying additional information in the messages that identifies the method and its return value. The protocol is illustrated in Figure~\ref{fig:notify-with-ack-remote-call} and consists of the following sequence of steps. 

\begin{enumerate}
\item Contract $c$ invokes $\rcall(a,x,d)$ on the adapter contract $p$ on chain $C$, where $a$ identifies the method on contract $d$ on chain $D$ that is to be executed and $x$ represents the parameters to $a$.
  
\item Adapter $p$ assigns a fresh identifier $s$ and a  ``future'' $f$ (with state $\Pending$) to this invocation, and returns $f$ to contract $c$. Contract $c$ can check the status of the remote call  at any later point by querying the future $f$ through $r:= query(f)$. (The adapter can also provide a facility for contract $c$ to be notified when the future changes state from \textsf{Pending} to \textsf{Completed}. We do not illustrate this in the figure.)
  
\item Adapter $p$ invokes $\send(m=(\rcall,a,x,s,d),q)$ on the bridge from $C$ to $D$, where $q$ is its corresponding adapter contract on chain $D$. 
\item The bridge communicates the message $m$ to contract $q$ through its $\recv$ method. 
\item Adapter $q$ parses $m$ as $(\rcall,a,x,s,d)$ and invokes method $a$ on contract $d$ with parameters $x$. Let $y$ represent the result of this invocation. 
\item Adapter $q$ then prepares a message $m'=(\ack,s,y)$ and sends it to the bridge for the reverse direction from $D$ to $C$ through $\send(m',p)$.
\item The bridge communicates the message $m'$ to contract $p$ through its $\recv$ method.
\item Adapter $p$ parses $m'$ as $(\ack,s,y)$ and updates the state of the future $f$ associated with $s$ to $\Completed(y)$. 
\end{enumerate}

At any point before the final step, contract $c$ may query the future $f$ for the state of the remote call. That will be $\Pending$ until the final step is complete, when the state changes to $\Completed(y)$ and is then unchanged.

This remote call mechanism is also asynchronous, allowing multiple remote calls to be ``in flight'' concurrently, possibly to contracts on different chains. As before, the fresh identifier $s$ is used to track the status of the remote call, which is reflected in the future $f$. The future/query mechanism can be used to turn the remote call into a synchronous one, by blocking any further actions of contract $c$ until the query returns the $\Completed(y)$ result.

Following the same line of reasoning as the proof of Theorem~\ref{thm:notify-ack}, we obtain this theorem. 
\begin{theorem} 
The protocol ensures that remote calls cannot be forged (safety) and are eventually executed according to their semantics (liveness). 
\end{theorem}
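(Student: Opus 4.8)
The plan is to mirror the liveness/safety decomposition used in the proof of Theorem~\ref{thm:notify-ack}, reusing the two assumptions on which that proof rests: that the execution of a blockchain and its contracts cannot be tampered with, and that the bridge satisfies Secure Transfer. The only structural difference is that the reverse-direction message now carries a payload --- the result $y$ of the remote invocation --- rather than a bare acknowledgement, so the argument must additionally account for the faithful invocation of method $a$ and the integrity of the returned value.

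For liveness I would trace the protocol in two passes. In the forward pass, starting from an $\rcall(a,x,d)$ issued by $c$, the untamperable working of chain $C$ guarantees that adapter $p$ eventually executes $\send(m=(\rcall,a,x,s,d),q)$; the liveness clause of Secure Transfer then delivers $m$ to $q$ via $\recv$; and the working of chain $D$ guarantees that $q$ invokes method $a$ on $d$ with parameters $x$. Because contract execution is faithful to its definition, this invocation runs according to $d$'s semantics and yields some result $y$, which establishes the ``eventually executed according to their semantics'' clause. In the reverse pass, the same three facts applied to $\send(m'=(\ack,s,y),p)$ and the reverse bridge show that $p$ eventually receives $m'$ and updates the future $f$ associated with $s$ to $\Completed(y)$, so that a subsequent $query(f)$ by $c$ returns the result.

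For safety I would argue that neither an invocation performed on $d$ nor a result observed by $c$ can be forged. By the safety clause of Secure Transfer on the forward bridge, the message $m$ that $q$ parses was genuinely sent by $p$ on chain $C$ and recorded there, so the call to $d$ originates from a valid contract and its parameters were not altered in transit; the adapter definitions then ensure $q$ invokes exactly method $a$ with those parameters. Symmetrically, Secure Transfer on the reverse bridge guarantees that the $m'=(\ack,s,y)$ parsed by $p$ was genuinely sent by $q$, and the faithful execution of $q$ guarantees that the $y$ it transmitted is exactly the value computed by $d$; hence the $\Completed(y)$ reported to $c$ records the true return value.

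The main obstacle, and the point where this proof goes beyond Theorem~\ref{thm:notify-ack}, is the integrity of the returned value $y$: unlike a bare acknowledgement, $y$ is data produced on a remote chain and must be shown to reach $c$ uncorrupted. The crux is that this rests on composing two independent guarantees --- the safety of the reverse bridge, which certifies that the message is authentic and untampered, with the untamperability of $q$'s execution, which ensures that the value placed in the message equals $d$'s actual output. I would make explicit that Secure Transfer alone is insufficient without the contract-integrity assumption, since the bridge only certifies what was sent, not that what was sent reflects the genuine result of the invocation.
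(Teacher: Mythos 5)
Your proposal is correct and follows essentially the same route as the paper: the paper gives no separate argument for this theorem, stating only that it follows ``the same line of reasoning as the proof of Theorem~\ref{thm:notify-ack},'' which is precisely the liveness/safety decomposition over the forward and reverse bridge passes that you carry out. Your additional point --- that the integrity of the returned value $y$ requires composing the reverse bridge's Secure Transfer safety with the untamperability of contract execution, since the bridge only certifies what was sent --- is a useful elaboration the paper leaves implicit, but it is a fleshing-out of the same argument rather than a different approach.
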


\section{An Atomic Multi-Chain Transaction Protocol}\label{Atomicity}

On a blockchain, any single invocation of a smart contract is atomic by design. However, there is no guarantee that a sequence of invocations executes as a unit; i.e., without interference from actions on that blockchain that may be initiated independently by other parties. There is a similar lack of atomicity when considering a sequence of invocations to  contracts that belong to different blockchains. Each blockchain executes independently of the others, so there is no protection from interference by actions that may be independently initiated by other parties.

But there are interesting use cases that require atomicity: for instance, two-way or multi-way cryptocurrency exchanges between tokens on different blockchains. The exchange must be executed as a unit, without interference from actions that could, for instance, deplete the account balances of the parties involved. In this section we present our atomicity protocol, which ensures that a cross-blockchain transaction either executes successfully or, if a constituent operation fails, then all operations are canceled and the effect is as if there was no change to the state of the participating blockchains. 

We begin by defining a blockchain as an abstract state machine; formulate a cross-chain transaction; describe the atomicity protocol; then establish its correctness.

\subsection{Cross-Chain Transactions}

For the purposes of defining the atomicity protocol, we may view a blockchain $B$ as an abstract state machine $(V,\Sigma,T)$, where: 
\begin{itemize}
\item $V$ is a set of typed state variables. This induces a state space $S$ which is the set of assignments of values to variables in $V$. (Notationally, we can say $S = V \rightarrow D$, where $D$ is the type domain. In the formal presentation, all variables have a single domain for simplicity.) The connection to a real blockchain is that $V$ represents the state variables for all contracts on the blockchain. Over time, smart contracts are added to the chain, thus extending $V$ with fresh state variables, but we can ignore this aspect for the formal presentation.

\item $\Sigma$ is a set of action names. In a real blockchain, each action represents a contract method along with values for its parameters. 

\item $T: S \times \Sigma \to S \times D$ is the deterministic transition function of the chain, which transforms the current state, say $s$, to a successor state $t$ on action $a$ ($ a \in$ $\Sigma$) and produces a result value. In an actual blockchain, this represents the execution of a smart contract method, which must be deterministic.

\item We associate a \emph{scope} with each action through a function $\scope: \Sigma \to 2^{V}$. The scope $\scope(a)$ of action $a$ is the subset of variables that is affected (read or written to) on a transition labeled by $a$.

\item An \emph{indexed} action is a pair $(B,a)$ where $B$ is a blockchain and $a$ is an action in $B$. 
\end{itemize}

A \emph{cross-chain transaction} is specified by a finite set of indexed actions that are related by an irreflexive partial order denoted $\prec$. This order reflects data dependencies: for every action $n$, the data that action $n$ relies upon must be produced by actions $m$ such that $m \prec n$. Actions that are not related by $\prec$ are called \emph{independent}.

A transaction can be partitioned (through a topological sort) into a sequence of \emph{layers} such that each layer consists of a set of independent actions, and for actions $m,n$ such that $m \prec n$, the layer containing $m$ occurs prior to the layer containing $n$ in the sequence. Formally, we have a set of \emph{layers}, $\{L_i\}$, to represent a cross-chain transaction and
\[\forall i, j, \ m \in L_i, \ n \in L_j : (m \prec n \Rightarrow i < j
)\]
where $L_i$ and $L_j$ are the $i$'th and $j$'th layer of the transaction, respectively while $m$ and $n$ are indexed actions.

% [k] Huaixi: making this a definition visually looks odd. Perhaps we can figure something out. Commenting for now to send to lawyers
The ideal (sequential and interference-free) execution proceeds as follows, shown in Figure~\ref{fig:tx_spec}. It begins by forming a checkpoint consisting of the current state of all blockchains participating in the transaction, represented as $cp$. At round $k$, operations on the nodes in layer $k$ ($L_k$) are issued in some order. (The actual order does not matter as the operations are independent.) If all operations in layer $k$  succeed, the results are saved and the next round (if any) is started. If some operation of layer $k$ fails, the execution is canceled and every blockchain is restored to its checkpointed state $cp$.

\begin{figure}
    \centering
    \includegraphics[width=\linewidth]{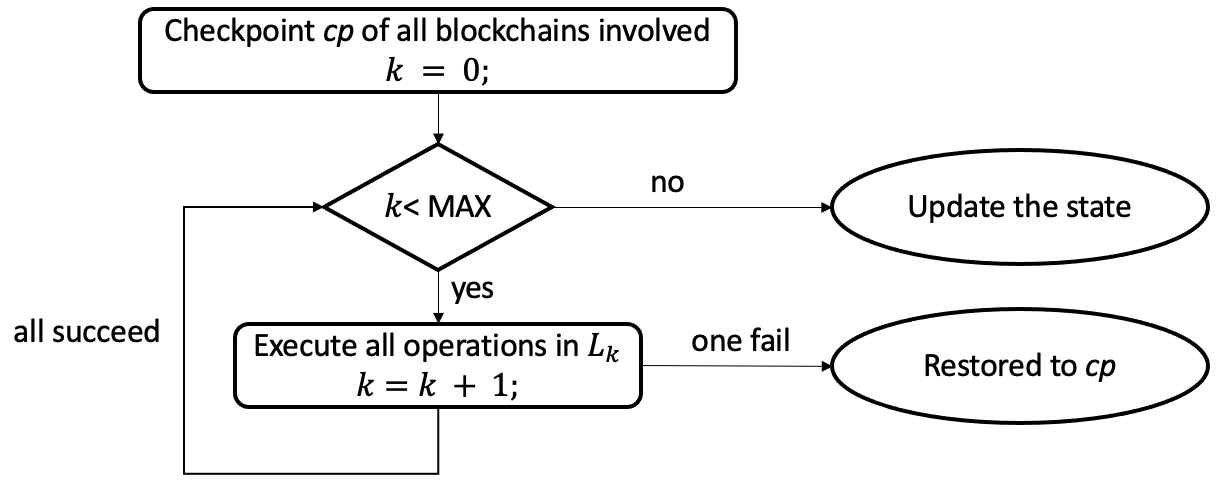}
      \caption{\label{fig:tx_spec} \textbf{The ideal execution for a cross-chain transaction}}
\vspace{-15pt}
\end{figure}

\subsection{Atomicity}

Multiple transactions may execute concurrently and may be interleaved with other blockchain actions issued independently of the transactions by other parties. The goal of the atomicity protocol is to ensure that in this concurrent, interference-prone setting, each transaction follows its ideal semantics as shown in Figure~\ref{fig:tx_spec}.

This is a two-phase protocol. A transaction is coordinated by a Proposer which is a smart contract on one of the blockchains. At a high level, the protocol follows the well known two-phase commit template for distributed database transactions. In the first phase, the proposer obtains agreement from all blockchains to “freeze” the portion of their state that is relevant to the operations in the transaction. In the second phase, the proposer executes operations from the transaction layer-by-layer, canceling the transaction if any individual operation fails.

We prove that protocol executions are strictly serializable. In terms of the closely related concept of linearizability, this guarantees that every transaction appears to take effect instantaneously, which provides atomicity with respect to other cross-chain transactions that follow this protocol and any contract operations that may be invoked independently. 

Two features sharply distinguish this protocol from two-phase commit protocols on databases. First, as the proposer and all operations operate on blockchains, we may assume that the likelihood of a machine failure is negligible, because of the protections provided by the replication that is built into blockchains. This is in sharp distinction to two-phase commit protocols on distributed databases, which are complex largely because they must handle various machine failure scenarios involving the Proposer and the databases. As a result, our atomicity protocol is significantly simpler than database commit protocols. 
The second distinction arises from the inherently trustless setting. Thus, any atomic protocol must contend with the potential for malicious behavior. This aspect is absent from traditional database protocols. It has two consequences. First, if the Proposer issues an operation on a remote blockchain, there is no guarantee that this operation will actually be executed. Our protocol therefore relies crucially on cross-chain bridges with the Secure Transfer property and on the abstract interface defined in the prior section, which ensures that operations will eventually be performed and cannot be corrupted. Second, our protocol relies on locking smart contract state. As the locking operations must be public, it is necessary to build in access controls to ensure that a malicious attacker cannot lock smart contracts to create a denial-of-service situation. 

To summarize, in the database domain, two-phase commit protocols focus on two types of failure: machine failures (i.e., machine crashes) and semantic failures (i.e., a failure within an operation, such as a divide-by-zero error). In our blockchain domain, machine failures may be ignored but, in their place, we must consider failures of trust, which arises from the potential for malicious attacks that aim to disrupt correctness or deny service.

\subsection{Blockchain Assumptions}
%%\commentaj{The paper talks about assumptions right above this. Then this heading "Assumptions" is confusing. Why not club all the assumtions at one place? This section is also talking about the chain properties only.}

We assume that two functionalities are provided by each blockchain that participates in transactions.

First, it must be possible to atomically checkpoint and lock a portion of the state of a single blockchain. We assume an abstract function $\lock(W,p)$ which atomically (1) saves the current values of the state variables $W$ to a copy $\checkpoint{W}$ of those variables and (2) blocks access to any action that modifies that state except when the action originates from the Proposer, denoted $p$.

%%\commentaj{Discuss this.}
Second, it must be possible to atomically restore and unlock the locked state.  We assume an abstract function $\unlock(W,f)$ which, when invoked by the Proposer $p$, atomically does the following: if $f$ is true (denoting a failure), the operation restores the state saved in $\checkpoint{W}$ to $W$; and, regardless of the value of $f$, the operation then removes the block on accessing the state in $W$.

In practice, these abstract operations may be implemented as follows. The variables $W$ represent the state variables of multiple smart contracts on the blockchain. To ensure atomicity for the lock and unlock operations and to prevent the misuse of the locking functionality, we define a \emph{transaction executor} contract 
%(shown in Figure~\ref{fig:bccp_bridge}) 
on each blockchain. The executor receives a transaction specification from the proposer and executes the transaction on behalf of the proposer. It also acts as a conduit for remote operation execution, as explained below. Each contract implements its own $\lock$ and $\unlock$ method. In addition, the code of the contract must ensure that every regular contract method is guarded to ensure that when the state variables of the contract are locked, the method can be invoked only by the transaction executor. Restricting the transaction execution to a known executor contract (which may be formally verified to ensure correct behavior) avoids the possibility of malicious behavior that can lock the state of participating contracts, creating a denial of service attack.

Pseudo-code illustrating the assumed smart contract structure is shown in Figure~\ref{fig:per-contract}. The syntax has its standard interpretation. The \verb|caller| variable represents the address of the party invoking the method; the \textbf{require} operation cancels execution if the value of its Boolean argument is false in the current contract state. For simplicity, the pseudo-code assumes that the methods in this contract do not invoke methods in other contracts. (Such indirect invocations are handled by relaying the address of the original external caller to the indirectly invoked methods.) 

\lstset{
  basicstyle=\small\ttfamily,
  keywordstyle=\color{black}\bfseries,%         \underbar,
  morekeywords={method,address,state,list,contract,boolean,require,in,and,or},
  morecomment=[l]{//}
}

\begin{figure}
  \centering
  \begin{lstlisting}
    contract C {
      s: state;                     
      checkpoint: state;
      owner: address;
      locked: boolean;              
      lockedBy: address;            
      trustedExecutors: list of address;   

      method addExecutor(address e) {
         require(caller = owner); 
         add e to the trustedExecutors list
      }
    
      method lock() {
        require(not(locked));
        require(caller in trustedExecutors);
        locked := True;
        lockedBy := caller;
        checkpoint := s;   // save 
      }

      method unlock(failure: boolean) {
        require(locked and caller = lockedBy);
        if (failure) {
          s := checkpoint;  // restore 
        } 
        locked := False;  
      }

      method m(...) { // guarded regular method
        require(not(locked) or caller = lockedBy);
        update state s 
      }
    }
  \end{lstlisting}
  \caption{\bf Sketch (pseudo-code) for the lock, unlock, and guarded regular methods of a contract.}
  \label{fig:per-contract}
\end{figure}

\subsection{The Atomicity Protocol}
At a high level, our protocol follows the template of a two-phase commit protocol. Such protocols were originally developed for distributed database transactions which, as explained, operate under a different model of trust and failure~\cite{bernstein2009principles, weikum2001transactional}.

\paragraph*{Transaction Execution} 
%%\hnote{is this duplicate with the transition T defined in blockchain?} [k] moved out of enumeration 
The structure of a cross-chain transaction $T$ is defined within an originating smart contract on some chain. Then, the transaction is sent to a the executor 
%\commentaj{Is this the executor of the proposed Atomicity Protocol?} 
contract on that chain. We refer to this executor as the Proposer and denote it by $p$. The Proposer communicates with executors on other chains using cross-chain bridges and the remote-call mechanism to freeze the state that is relevant to $T$, to execute individual operations, and to handle failures. Once the transaction is complete, the proposer notifies the originating smart contract.

\begin{figure}
    \centering
    \includegraphics[width=\linewidth]{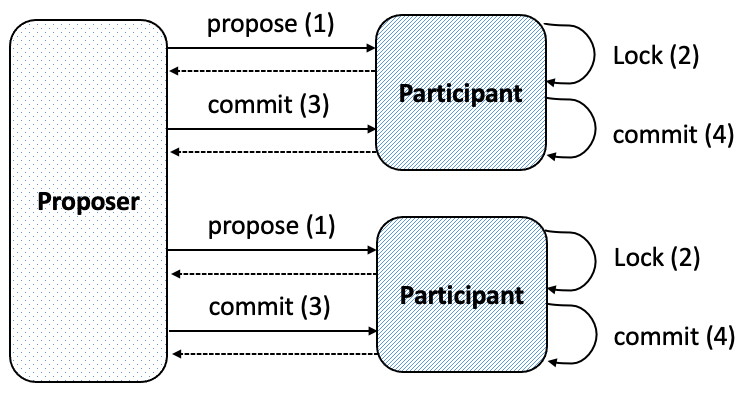}
      \caption{\label{fig:two_phase} \textbf{Overview of the two phase commit protocol.}}
\vspace{-15pt}
\end{figure}

\paragraph{Protocol Stages} The protocol consists of the following stages, while the first two stages are shown in Figure~\ref{fig:two_phase}.
\begin{enumerate}
\item{\bf First Phase}. The Proposer $p$ obtains agreement from all blockchains to “freeze” the state relevant to executing the transaction $T$ through step 1 and step 2 in the figure. 

Consider a blockchain $B$. Let $W$ be the union of $\scope(a)$ for every indexed action $(B,a)$ included in $T$. Through the remote-call mechanism, the proposer requests the trusted executor contract on blockchain $B$ (denoted $\texec(B)$) to checkpoint and lock the variables $W$ through the abstract $\lock(W,\texec(B))$ operation. In practice, $\scope(a)$ is a list of contracts on blockchain $B$. We assume that this list correctly defines the actual scope of action $a$, i.e., the contracts whose state may be modified either directly by the method $a$ or indirectly by contract methods invoked from $a$. (This transitive reach can be determined through automated code analysis.) The executor contract carries out the $\lock$ operation by invoking the \verb|lock| method in sequence on every contract in the supplied list. By the semantics of a blockchain, this sequence of method calls is executed atomically on that chain. 

  Only if the $\lock$ operation succeeds on every blockchain that is relevant to the transaction does the process move to the next phase. Otherwise, the transaction is canceled and the process moves to the failure mode defined below. The transaction may (potentially) be re-tried at a later point by the smart contract that originated the transaction.

\item{\bf Second Phase}. The proposer executes operations from $T$ in a series of rounds (repeating step 3 and step 4 in the figure). In the $k$'th round, the proposer invokes the operations defined in the $k$'th layer of the transaction $T$. The invocation of the indexed operation $(B,a)$ is carried out indirectly via the trusted executor $\texec(B)$ of blockchain $B$.

  In practice, this may be done by using the remote call mechanism to convey the action $a$ to $\texec(B)$, which then invokes the action on the contract where it is defined.

  The proposer waits until all actions in the $k$'th layer are complete. If every action completes successfully, the proposer moves to the next round. If not, the proposer enters the failure handling mode, described below. If all rounds are completed successfully, the proposer enters the success mode, described below. 
\end{enumerate}
  
  %%\commentaj{Formatting issue here. The bullet header says there are two stages but then we have 4 steps here. I am making a new para and bullet list. See if that's better. We can updated the para heading and opening statement.}

\paragraph{Transaction Outcomes} 
\begin{enumerate}
\item{\bf Failure}. The entire transaction is canceled. The proposer invokes $\unlock(W,\True)$ on each blockchain $B$ relevant to the transaction on which a $\lock(W)$ operation was successfully performed. This restores the state of that blockchain to its checkpointed value and unblocks the affected smart contract methods. In practice, this abstract operation is carried out by requesting the executor $\texec(B)$ on blockchain $B$ to invoke the \verb|unlock| method on every relevant contract with the failure parameter set to true. 

\item{\bf Success}. The transaction is committed. That is done by the proposer invoking $\unlock(W,\False)$ on each blockchain $B$ relevant to the transaction. In practice, the abstract operation is carried out by requesting the executor contract $\texec(B)$ on blockchain $B$ to invoke the \verb|unlock| method on every relevant contract with the failure parameter set to false.   
\end{enumerate}

\subsection{Correctness}
%% [kedar: dont follow the role of the fair exchange theorem]

We show that transactions may be viewed as being atomic. This follows from showing that executions are linearizable, in that each transaction appears to take effect (matching its ideal semantics) at some point between its start and finish.

%% [k] previous proof mixes up atomicity and failure/success . New proof separates the two. Same text, just differently arranged.
\begin{theorem}
  Transaction executions are secure and atomic. 
\end{theorem}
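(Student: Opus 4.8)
The plan is to prove the two assertions separately: \emph{security} (no attacker can forge an operation, interfere with a transaction's frozen state, or mount a denial-of-service lock attack) and \emph{atomicity} (executions are strictly serializable, with each transaction realizing its ideal semantics). I would dispatch security first, since it reduces to the access-control structure of Figure~\ref{fig:per-contract} together with the correctness of the remote-call layer. The guard \texttt{require(caller in trustedExecutors)} in \texttt{lock} means an attacker cannot freeze contracts it does not control, ruling out the DoS attack; the guard \texttt{require(not(locked) or caller = lockedBy)} on every regular method means that once a contract's state is locked on behalf of a proposer $p$, only $p$ (acting through $\texec(B)$) can modify it, so independent or malicious actions on variables in the $\scope$ of the transaction are rejected while the lock is held. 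Finally, because every cross-chain step is carried by the remote-call mechanism, the earlier theorems give that the proposer's $\lock$, operation, and $\unlock$ requests cannot be forged and are eventually executed faithfully; hence an attacker can neither fabricate nor suppress a legitimate step. That establishes security.

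For atomicity I would exhibit, for each transaction $T$, a single linearization instant and show that ordering transactions by these instants yields a strictly serializable schedule. Take $t_c(T)$ to be the global instant at which the proposer enters its success (commit) or failure (abort) mode -- a definite event on the proposer's chain. The key lemma is a \emph{simultaneity} claim: at the instant $t_c(T)$, every variable $v$ in the scope of $T$ is locked on behalf of $T$. This follows because each $\lock(W,\texec(B))$ is committed on $B$ causally before the proposer begins its second phase, which in turn precedes the commit/abort decision, while each $\unlock$ is issued only after that decision; causal precedence forces the corresponding global-time precedence, so on chain $B$ the variable $v$ is already locked and not yet released at $t_c(T)$. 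Because each blockchain atomically serializes its own operations and \texttt{lock} contains \texttt{require(not(locked))}, two transactions can never hold a lock on a shared variable simultaneously; hence for conflicting $T_1,T_2$ (those whose scopes intersect) the lock-holding intervals on any shared variable are disjoint, and since each interval contains the respective commit instant, $t_c(T_1)\neq t_c(T_2)$ and the earlier transaction releases the shared variable before the later one acquires it.

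I would then order all transactions by the real-time value of $t_c(\cdot)$. By the disjointness just argued, this order is consistent with every conflict (the later transaction reads the committed effect of the earlier one on each shared variable), so the schedule is serializable; and because $t_c(T)$ lies between the start and finish of $T$, the order also respects the real-time precedence of non-overlapping transactions, giving strict serializability, i.e.\ a valid linearization point per transaction. Within its locked window a transaction operates on state that is frozen against all other parties, so its layer-by-layer execution sees exactly the checkpointed snapshot threaded through the data-dependency order $\prec$, matching the ideal semantics of Figure~\ref{fig:tx_spec}: on success the modified state persists ($\unlock$ with $\False$), and on failure $\checkpoint{W}$ is restored ($\unlock$ with $\True$), leaving the net effect null, so a failed transaction linearizes as a no-op.

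The main obstacle is the simultaneity lemma -- converting the asynchronous, per-chain $\lock$ and $\unlock$ events into a statement about a single global instant. Blockchains share no common clock, so I must argue purely from causal ordering (provided by the proposer's control flow and the secure-transfer guarantee that a message is delivered only after its $\send$ is committed) that all of $T$'s locks are jointly held at $t_c(T)$, and that per-variable mutual exclusion, a purely local single-chain property, composes into a global, acyclic serialization order. Once the commit instant is shown to witness joint lock ownership, the reduction to classical strict two-phase-locking serializability is routine; the care lies in justifying that reduction in the trustless, clockless, multi-chain setting.
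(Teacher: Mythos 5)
Your proof is correct, but it takes a genuinely different route from the paper's. The paper argues by \emph{commutation}: it takes an arbitrary interleaved computation and shows that every non-$T$ operation can be commuted left over the lock block or right over the unlock block (operations on locked state are impossible by the guards; the remaining ones touch state disjoint from $\scope$ and hence commute), yielding an equivalent computation in which all of $T$'s operations, including lock and unlock, form one contiguous atomic block --- a direct construction of the reordering demanded by the strict-serializability definition in the threat model. You instead give the classical strict two-phase-locking argument: a linearization instant $t_c(T)$ at the commit/abort decision, a simultaneity lemma (all of $T$'s locks are jointly held at $t_c(T)$, justified causally via secure transfer rather than by a shared clock), disjointness of lock intervals on shared variables, and ordering by $t_c$. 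The two approaches buy different things. The paper's commutation argument is closer to the paper's own definition of atomicity (it literally exhibits the reordering) and stays elementary, but it is phrased around a single transaction $T$ with everything else lumped as ``non-$T$ operations.'' Your linearization-point argument handles multiple concurrent protocol transactions more systematically --- the pairwise disjointness of lock intervals and the induced total order make the multi-transaction case explicit --- and it isolates exactly the one nontrivial distributed-systems fact needed (causal precedence of every per-chain lock before the decision point, and of the decision before every unlock), which the paper leaves implicit in its phrase ``locked at the start \ldots only unlocked at the end.'' Your security half also matches the paper's, though you ground it more concretely in the \texttt{trustedExecutors} and \texttt{lockedBy} guards of the contract sketch, where the paper appeals to those access controls in the protocol description rather than inside the proof. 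Both proofs cover the success and failure outcomes identically (persist on $\unlock(W,\False)$, restore the checkpoint on $\unlock(W,\True)$, so a failed transaction is semantically a no-op).
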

\begin{proof}
  Consider the execution of a transaction $T$. We first show that the execution of a transaction (whether it is complete and successful, or partial and unsuccessful) can be made atomic. We then show that the state of the participating blockchains is updated correctly on each outcome. 

  All state relevant to the operations in $T$ is locked at the start of the execution and only unlocked at the end of execution. Hence, it is not possible for an operation outside $T$ to interfere with the state transitions in $T$. The secure transfer property of cross-chain bridges ensures that all remote calls issued by the Proposer are eventually executed correctly, and that every call issued by the transaction executor on a chain seemingly on behalf of the Proposer must have actually originated at the Proposer. Thus, every operation in the transaction is eventually executed, and no operations other than those in the transaction are executed on the locked state. The transaction execution proceeds in the same sequence of rounds as the ideal execution. In each round, operations are independent; thus, concurrent execution results in the same state as any sequential order. 

  At each end of the transaction, the locking and unlocking processes proceed sequentially. Suppose that blockchains $B_0$ and $B_1$ are locked in that sequence. It is possible for other (non-$T$) operations to be performed on the state of $B_1$ after $B_0$ is locked but before $B_1$ is locked. Similarly, it is possible for other non-$T$ operations to be performed on the state of $B_1$ after $B_1$ is unlocked but before $B_0$ is unlocked (assuming that locks are removed in the reverse order). The first set of operations can be commuted to the left over the lock for $B_0$; while the second set of operations can be commuted to the right over the unlock of $B_0$. This defines an equivalent execution where the lock (similarly, unlock) operations occur in a contiguous block. 
  
  Any non-$T$ operations that are interleaved with the transaction operations must (by the definition of scope) operate on state that is disjoint from the state relevant to $T$. Hence, those operations can be commuted either to the left over the lock operations, or to the right over the unlock operations. The final result of the commutations is an equivalent computation where the operations of $T$ (including lock and unlock) are executed in a single atomic block.  
  
  Suppose that the execution of all operations in $T$ completes successfully. The success mode of the protocol ensures that the unlock operations ake the state changes permanent, as required by the atomicity semantics. If, however, execution of $T$ is partial, this must be due to a failed operation in some layer. The failure mode of the protocol ensures that the unlock operations restore the checkpointed state in each blockchain. Thus, in the failure case, the state in the scope of $T$ is unchanged, also as required by the atomicity semantics. 

  (Note: as all operations on a blockchain are visible, in the failure mode, the blockchain will record intermediate state changes as well as the operation that cancels those changes by restoring the checkpointed state. Hence, the failure of $T$ is visible in terms of the recorded blockchain operations; however, it is \emph{not visible} in terms of the semantics.)
  \end{proof}

There is, however, no guarantee that every issued transaction will eventually successfully acquire all relevant locks. One can easily construct pathological schedules where two transactions that must acquire locks on the same (or overlapping) sets of variables are issued together so that one transaction always loses the race. However, it is important to note that since every operation on a blockchain incurs a cost, such scenarios cannot persist indefinitely.

%%\knote{TODOs:
%%1. Paper format for CCS. Bring back page numbers.
%%2. Discussion of security. What is the attack model? How do the assumptions on bridges and our design decisions (e.g., adapters and executors) ensure security?
%%3. Measure the overhead (coding effort) for defining atomic transactions (in 10s of lines of code). 
%%4. Also revise Table 1 heading and labels for each row to make clear that those numbers represent the effort for implementing the abstractions. I.e., not the effort for implementing the transactions. 
%%4. Point out low effort needed in introduction. I.e., now that the abstraction layer is implemented, it requires only a handful of lines of code to create new transactions. 
%%5. Typo Figure 7.}

%\input{sec5_correctness}
\section{Implementation and Experiments}
%%\knote{Working through this section}
%In this section we need to define major components needed for real world execution of this system.

%A couple of lines summarizing those components

%You can also provide what (if) APIs these components are exposing or are expected to expose or should expose

%Any major back compatibility issues must be describe here too.

%If your design is forward compatible then shout loudly about it.

To demonstrate practicality of the proposed 
%%\ajcomment{it will be better to identify the model clearly like "atomic protocol".} 
model and atomic protocol, we wrote a suite of smart contracts in Solidity. We further deployed them across four different testnets\footnote{Testnets are functionally similar to production blockchains the only difference is that the tokens on testnets do not have financial transaction value. Testnets are used for testing and research purposes.} including Mumbai, Fantom, Fuji, and Goerli. However, due to the limitation of test token amount, we only tested the applications on two testnets listed in Table.~\ref{tb:exp}. We used Hardhat~\cite{hardhat} to deploy smart contracts on the testnets.

% \editaj
% {Moreover, we brought to life two multi-chain applications: an atomic swap between two chains and an exchange process encompassing three chains.
% We tested the delay of one transaction and the amount of gas consumption in total for these two applications under different scenarios.
% All experimental details will be shown in subsection.xx.}
Further, we developed following two multi-chain test applications: (1) An atomic swap between two chains and (2) An exchange process encompassing three chains.
We studied the transaction latency and gas consumption for the test applications to show the applicability of our work. Details of the results are in \S\ref{sec:impl:results}. 

%%[k] omitted as we may need permission to make the code accessible, even for review.
%%\footnote{\hnote{Source code for all the designs and experiments is available at https://anonymous.4open.science/r/SCCP-9CE8.}}.

%\subsection{Solidity Smart Contracts}
\subsection{Implementation}
Bridges such as LayerZero and IBC protocol typically host endpoint contracts within each blockchain. These endpoints facilitate communication with other blockchains. A challenge arises due to the heterogeneity of these endpoint interfaces across different bridges. To navigate this, we introduce a unified converter interface applicable to all bridges. The structure of our implementation (in Solidity) is shown in Figure~\ref{fig:bccp_bridge}. The interface includes functions such as \emph{notify}, \emph{notify\_ack}, and \emph{remote\_call}, which are  realized using the lower-level bridge functions. 
%%In addition, we also define a name variable in the converter for locating the bridge in use. 
Once a bridge converter is defined   on a blockchain, the abstraction simply requires the inclusion of that converter's address within its contract. This integration allows the abstraction to employ a standardized communication interface, enabling the use of various bridges by invoking distinct bridge converters as needed. Though one interface function may involve several transactions in different chains, the complicated payment for multiple communication is abstracted away, making it possible for an application designer to focus on the application logic.

\begin{figure}
    \centering
    \includegraphics[width = \linewidth]{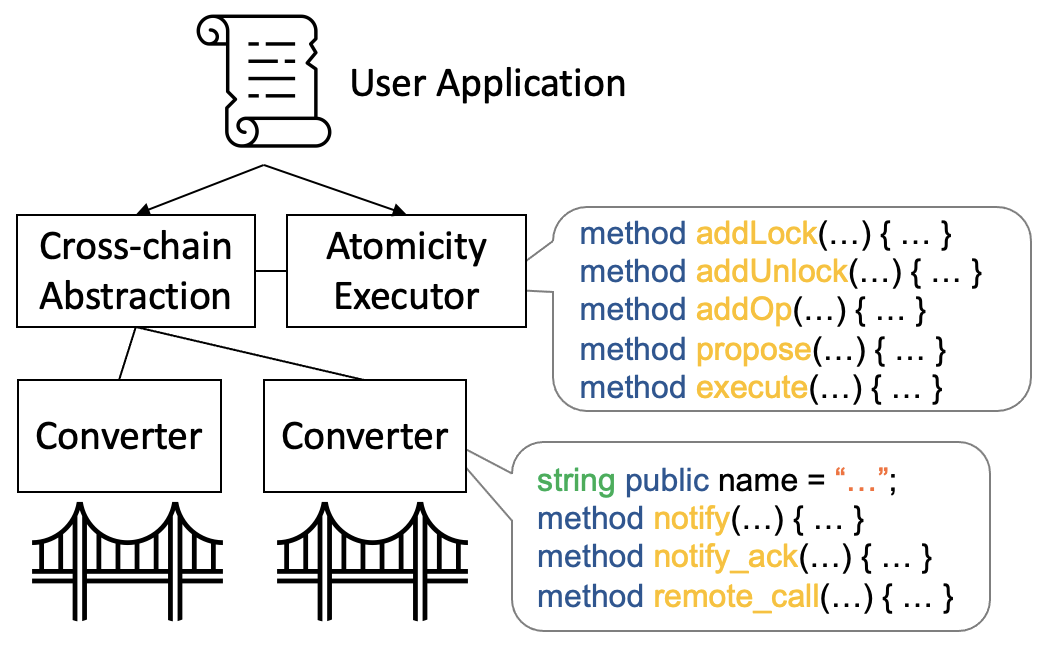}
      \caption{\label{fig:bccp_bridge} \textbf{Implementation Framework.}}

\end{figure}

The implementation of atomic transactions builds on this communication layer. The structure is defined by public function calls such as \emph{addLock}, \emph{addUnlock}, \emph{addOp}, \emph{propose}, and \emph{execute}. %%Application developers can effortlessly define all the necessary lock/abort/execute/unlock operations via these functions. 
Application developers must deposit a specified amount to support payment for bridge communications that occur during the transaction execution. 
% into the bridge for any subsequent transactions followed by the initial transaction (making sure that all communication among chains can be done). By using our atomic contract, it's ensured that the transaction is either executed across all blockchains or universally aborted. 
Notably, within the atomic execution smart contract, storage provisions are made to retain details of these multi-chain applications. Once an application is documented, it's stored permanently, allowing for subsequent initialization without the redundancy of re-defining all requisite operations.

The sizes of the contracts that make up the implementation of our abstractions, and the deployment cost on Testnet Fantom, are shown in Table.~\ref{tb:solidity}. The abstraction and atomicity smart contracts only need to be deployed once on the blockchain. 
%They are programmed so that  manage different messages from different bridges. 
The implementation is flexible and supports multiple bridges. Supporting a new bridge for connection with another chain  requires the creation of a converter (approximately 242 lines of code), and a function call to register the bridge with the converter. 
%and a single function call in the abstraction to incorporate this converter.

\begin{table}[]
\caption{Statistics for Deploying the Atomicity Protocol Smart Contracts in Fantom}
\label{tb:solidity}
\resizebox{\linewidth}{!}{%
\begin{tabular}{|c|c|c|}
\hline
\textbf{Smart Contract} & \textbf{Size (LoC)} & \textbf{Deploy gas} \\ \hline
\begin{tabular}[c]{@{}c@{}}Convertor\\ (LayerZero)\end{tabular} & 242 & 1.4 M \\ \hline
Abstract Bridge & 601 & 2.7 M \\ \hline
Atomicity Executor & 225 & 1.6 M \\ \hline
\end{tabular}%
}
\end{table}

\begin{table}[]
\caption{Overhead (coding effort) for defining atomic transactions}
\label{tb:solidity_app}
\resizebox{\linewidth}{!}{%
\begin{tabular}{|c|c|c|}
\hline
\textbf{Contract} & Atomic Swap & Three-Chain Exchange \\ \hline
\textbf{Overhead (LoC)} & 10 & 14 \\ \hline
\end{tabular}%
}
\end{table}

%Once the essential infrastructure for the atomicity protocol is deployed, only a 

From a programmer's viewpoint, this  abstraction layer considerably simplifies the process of constructing atomic cross-chain transactions. Only a few lines of code are needed to implement an atomic swap and a three-chain exchange, as shown in Table.~\ref{tb:solidity_app}. This code sets up the structure of the transaction (its operations and their ordering), using the interface shown in  Figure~\ref{fig:bccp_bridge}. 

% shows the overhead of coding effort needed to make the test application atomic over multiple chains, including a two-chain swap and an exchange among three distinct chains. Both require less than 20 LoC to define the atomic cross-chain transaction. This additional effort predominantly involves method calls within the Atomicity Executor, as depicted in 
% To execute a single cross-chain transaction atomically, manual inclusion of all requisite operations is necessary. These operations include  \emph{lock}, \emph{unlock}, \emph{propose} and \emph{execute}, ensuring the transaction's execution is correctly managed by our atomicity protocol.}

\subsection{Experiments}
We first executed a two-chain atomic swap between the Mumbai and Fantom testnets. Mumbai serves as Polygon's testnet, mirroring the mainnet functionalities of Polygon. In this process, Fantom testnet plays the role of the proposer. For a successful transaction on the Mumbai blockchain, three operations are essential: lock, transfer, and unlock. Each operation requires a remote call from the fantom testnet and sends back an acknowledgement to indicate the status of execution, which is implemented through the \emph{remote\_call} in the abstraction. 
The atomic swap transaction, when successful, undergoes the following steps:
\begin{itemize}
    \item Step 1: Fantom proposes the transaction to Mumbai and self-locks.
    \item Step 2: Mumbai consents, then locks, and dispatches an acknowledgment.
    \item Step 3: Fantom initiates a locked-mode transfer while prompting Mumbai to follow suit.
    \item Step 4: Mumbai completes the transfer and sends an acknowledgment.
    \item Step 5: Fantom initiates the unlock process, directing Mumbai to do the same.
    \item Step 6: Mumbai concludes the unlock phase and communicates acknowledgment.
\end{itemize}
As a result, a successful atomic transfer requires 6 messages between two blockchains.

In addition to the successful case, we experimented with two potential failure scenarios to gauge the resilience of our protocol. The first scenario is a lock conflict in Step 2, where another transaction already claims the lock. The second involves an unsuccessful transfer due to inadequate funds during Step 4. Both situations prompt a transaction termination (Abort), reverting both blockchains to their pre-transaction states. The outcomes of these tests were in line with our expectations.

\begin{figure}
    \centering
    \includegraphics[width = 0.9\linewidth]{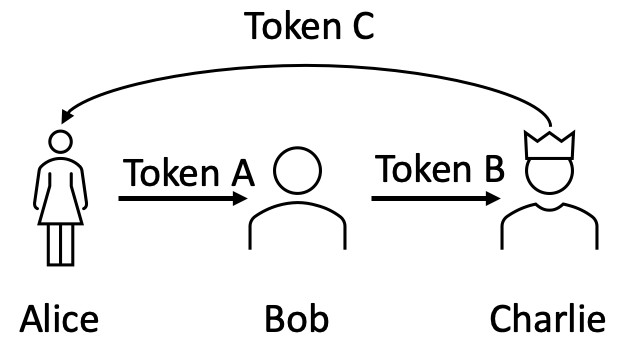}
      \caption{\label{fig:multi_chain} \textbf{Three-Chain Exchange.}}

\end{figure}

We also explored a multi-chain scenario involving exchanges across three distinct chains. Imagine Alice possesses token A and aims to trade it for token C with Charlie. However, Charlie lacks an account on Chain A and only accepts token B as a valid trade for token C. Therefore, Alice transfers token A to Bob, who then gives token B to Charlie, resulting in Charlie providing token C to Alice. This scenario is visually depicted in Figure.~\ref{fig:multi_chain}. Since three different tokens are used, three different chains are involved. Such multi-chain exchanges frequently occur in supply chain management systems, as noted in ~\cite{augusto2023multi}, emphasizing their growing significance in the evolving blockchain landscape.

Though we have successfully deployed contracts across multiple chains and conducted localized function tests, certain multi-chain applications remain challenging due to a shortage of testnet tokens. For instance, following certain updates that ascribed real value to the Goerli token, contract deployment on Goerli became considerably costly. Consequently, for our three-chain exchange experiment, we simulated the application on a single chain, imitating two distinct chains. As cross-chain communication is exclusive between the proposer and participants and absent among participant chains, this approximation does not compromise protocol integrity.

\subsection{Results} \label{sec:impl:results}
\begin{table*}[]
\caption{Measurements for Various Cross-chain Transactions. (Column headings: XC=cross-chain; \# of tx is the number of smart contract transactions; Total Fee/Token refers to the fee in terms of the cryptocurrency tokens for that chain; Bridge Gas is the gas cost for  bridge operations; Atomic Gas is the gas cost for  atomic operations.)}
%Atomic Application on Real Blockchain Statistics}
\label{tb:exp}
\resizebox{\textwidth}{!}{%
\begin{tabular}{|c|cl|ccccccc|}
\hline
\multirow{2}{*}{\textbf{\begin{tabular}[c]{@{}c@{}}Multi-Chain\\ Apps\end{tabular}}} & \multicolumn{2}{c|}{\multirow{2}{*}{\textbf{\begin{tabular}[c]{@{}c@{}}Test Chain\\ (L: Proposer\\ P: Participants)\end{tabular}}}} & \multicolumn{7}{c|}{\textbf{Transaction Statistics}} \\ \cline{4-10} 
 & \multicolumn{2}{c|}{} & \begin{tabular}[c]{@{}c@{}}\# of XC \\ msgs\end{tabular} & \begin{tabular}[c]{@{}c@{}}\# of \\ tx\end{tabular} & \begin{tabular}[c]{@{}c@{}}Total\\ Gas\end{tabular} & \begin{tabular}[c]{@{}c@{}}Total Fee\\ / Token\end{tabular} & \begin{tabular}[c]{@{}c@{}}Bridge\\ Gas\end{tabular} & \begin{tabular}[c]{@{}c@{}}Atomic\\ Gas\end{tabular} & \begin{tabular}[c]{@{}c@{}}Avg.\\ Time\end{tabular} \\ \hline
\multirow{2}{*}{\textbf{Atomic Swap}} & \multicolumn{2}{c|}{L: Fantom} & 3 & 4 & 2.14M & 0.003 & 1.11M & \multicolumn{1}{c|}{1.03M} & \multirow{2}{*}{301s} \\ \cline{2-3}
 & \multicolumn{2}{c|}{P: Mumbai} & 3 & 3 & 1.38M & 0.003 & 1.02M & \multicolumn{1}{c|}{0.36M} &  \\ \hline
\multirow{2}{*}{\textbf{\begin{tabular}[c]{@{}c@{}}Atomic Swap \\ (Lock Fail)\end{tabular}}} & \multicolumn{2}{c|}{L: Fantom} & 2 & 3 & 1.48M & 0.001 & 0.74M & \multicolumn{1}{c|}{0.74M} & \multirow{2}{*}{196s} \\ \cline{2-3}
 & \multicolumn{2}{c|}{P: Mumbai} & 2 & 2 & 0.85M & 0.002 & 0.68M & \multicolumn{1}{c|}{0.17M} &  \\ \hline
\multirow{2}{*}{\textbf{\begin{tabular}[c]{@{}c@{}}Atomic Swap\\ (Update Fail)\end{tabular}}} & \multicolumn{2}{c|}{L: Fantom} & 3 & 4 & 2.13M & 0.003 & 1.11M & \multicolumn{1}{c|}{1.02M} & \multirow{2}{*}{283s} \\ \cline{2-3}
 & \multicolumn{2}{c|}{P Mumbai} & 3 & 3 & 1.37M & 0.003 & 1.02M & \multicolumn{1}{c|}{0.35M} &  \\ \hline
\multirow{3}{*}{\textbf{\begin{tabular}[c]{@{}c@{}}Three \\ Exchange\end{tabular}}} & \multicolumn{2}{c|}{L: Fantom} & 6 & 4 & 4.33M & 0.006 & 2.22M & \multicolumn{1}{c|}{2.11M} & \multirow{3}{*}{318s} \\ \cline{2-3}
 & \multicolumn{2}{c|}{P: Mumbai-1} & 3 & 3 & 1.53M & 0.003 & 1.02M & \multicolumn{1}{c|}{0.51M} &  \\ \cline{2-3}
 & \multicolumn{2}{c|}{P: Mumbai-2} & 3 & 3 & 1.53M & 0.003 & 1.02M & \multicolumn{1}{c|}{0.51M} &  \\ \hline
\end{tabular}%
}
\end{table*}

The results of our experiments can be found in Table~\ref{tb:exp}. Fantom acts as the proposer, while Mumbai serves as the participants. For each transaction, we documented the number of cross-chain messages invoked by the LayerZero bridge, as well as the number of operations performed on both chains. 

Gas is essential for the execution of all these operations. When calculating gas consumption, two major components come into play:
1) Gas required for bridge functionalities, which include sending and validating messages.
2) Gas necessary for atomic operations.
To delineate these different gas requirements, we designed a rudimentary cross-chain application that uses a recursive call with an empty payload. Through this, we measured the basic communication gas consumption between Fantom and Mumbai. Any additional gas was attributed to atomic operations.

Analyzing the data from Table.~\ref{tb:exp}, we observe that the gas consumption for atomic operations is generally less than that for bridge communications. In all listed scenarios, atomic operations account for as little as 20 percent to a maximum of 50 percent of the total gas. This indicates that the primary gas consumption in atomic transactions predominantly arises from bridge communication, while the overhead from atomic operations remains small.

Furthermore, the main contributors to transaction latency are the bridge communication between the two chains and the time required for blockchain confirmations. This is largely due to the atomicity protocol being integrated within on-chain smart contracts. To acquire a more accurate understanding of the average transaction time, we executed a single transaction multiple times and computed the mean duration for a complete transaction. According to the results in Table.~\ref{tb:exp}, transactions involving three communication rounds (as seen in cases 1, 2, and 4) typically conclude in about 5 minutes. In contrast, scenarios like the lock failure in atomic swap, which involve only two communication rounds, might complete in just 3 minutes. These timeframes, spanning a few minutes, are reasonable when compared with other cross-chain transaction durations.

\subsubsection{Atomic Swaps}
Atomic swaps have become a staple in modern cross-chain communications. This mechanism ensures seamless and trustless exchanges between two blockchains without the need for intermediaries. The essence of atomicity in these swaps can often be embedded directly within the design of the bridge facilitating the exchange. Typically, a successful swap necessitates only two communication rounds. However, our approach diverges slightly by introducing an additional communication round. While this might seem more cumbersome compared to the prevalent methods, our methodology offers broader compatibility and adaptability to decentralized applications, ensuring they operate with enhanced reliability and security.

\subsubsection{Three-Way Exchange}
The current landscape of blockchain technology lacks support for multi-chain applications. This evident gap serves as a primary motivation behind this work. By integrating an atomicity layer atop existing bridges, our mechanism considerably simplifies the development of complex multi-chain applications. A compelling example is the three-way exchange, which is challenging to program from scratch, but is realized with just $14$ lines of code using our framework. 

% multi-chain operations but also guarantees atomicity for such intricate applications. This advancement ensures that transactions across multiple chains occur seamlessly and are either fully executed or entirely aborted, thus preserving the integrity of the operations. While there's an overhead introduced by this additional layer, it remains within a reasonable range, ensuring that the efficiency of the bridge isn't significantly compromised. This innovation paves the way for more complex and interconnected blockchain ecosystems in the future.

%\commentaj{Do you have some qualitative/quantitative results to show that the contract written in Solidity using the proposed design is similar or not too bad in terms of complexity than the usual program?}
\section{Related Work}
As blockchain interactions across different platforms become increasingly prevalent, the challenge of varying security assumptions among these blockchains gains a lot of attention. To address this, weaker blockchain may use checkpointing solutions, as discussed in~\cite{karakostas2021securing, sankagiri2021blockchain}, which boost the trust using a stronger blockchain. However, these solutions require the weaker blockchain to give up sovereignty. Alternatively, as shown in ~\cite{wang2022trustboost}, protocols built upon the cross-chain bridges can be deployed to create a combined ledger with boosted trust for weaker blockchains. Our work uses similar idea with~\cite{wang2022trustboost} to have protocols atop bridges but to provide atomicity for multi-chain transactions.

Atomicity in transactions spanning multiple blockchains has received great attention. For example, the work in~\cite{xue2023fault, herlihy2019cross} formulated and solved the application of asset exchange involving more than two chains. There are two major differences in our work. 
First, our method has a substantially broader scope of applicability. We formulate and solve the question of atomically performing an \emph{arbitrary} sequence of \emph{arbitrary} smart contract transactions. Asset exchange is one specific instance of this general framework.
Second, the underlying solution approach is quite different. The work in~\cite{herlihy2019cross} presents two solutions: the first is based on a construct called a ``timed lock," the second on a custom protocol (CBC). While our protocol is similar in spirit to their CBC protocol, as both are based on the general two-phase template used in distributed databases, there are differences in scope and assumptions. Their protocol is designed specifically for the asset transfer problem, e.g., the ``escrow'' step in CBC escrows an asset, while the locking mechanism in our method blocks operations on an arbitrarily defined state. The paper discusses the issue of attackers forging transactions but does not fully define a protocol to prove that claimed transactions have indeed occurred on the CBC chain. Our solution relies on the general notion of a cross-chain bridge, which provides precisely this guarantee in the form of the secure transfer property. Partitioning the solution in this manner  allows for the atomicity protocols to benefit from improvements in bridge implementation, and enhances portability across bridges. 

Our protocols rely crucially on bridges with the secure transfer property. We use the well known zkbridge~\cite{xie2022zkbridge} and LayerZero~\cite{zarick2021layerzero}) in our prototype implementation. 

%\hnote{TODO: add some related work in database or distributed system for two phase commit protocol?}
\section{Conclusions}
Blockchain interoperability is already a significant problem and one that we expect will become even more relevant as data is spread across multiple blockchains. Thus, it is important to build high-level abstractions that simplify programming and increase productivity. 

Towards this goal, our work introduces a standardized high-level interface for efficient blockchain communication, enhancing cross-chain portability. Our work also defines a broadly applicable, multi-blockchain transaction protocol, which ensures atomicity for applications spanning multiple chains. For both protocols, we rigorously formalize and prove the correctness and security of these protocols under clear assumptions on blockchain and bridge behavior and smart contract structure. 

We show that these protocols are easily implementable. Our prototype implementation is programmed in Solidity, utilizing LayerZero (mostly) and IBC bridges. This prototype, consisting of both the high-level interface and the atomicity protocol, has been rigorously tested on testnets. The successful application of our atomicity protocol in executing both pairwise and multiway cryptocurrency swaps on these testnets demonstrates its practical effectiveness and broad applicability.

Overall, our work opens up new possibilities for general multi-chain transactions and, we believe, sets the stage for further research in this area. An important direction for further research is to discover finer-grained structure in multi-chain transactions that can be exploited to optimize storage and communication costs. 

%%such as better  of the multi-chain transaction to reduce the storage cost. The practical uses we hve demonstrated show the great promise of our protocols, leading the way for more creative and effective solutions in blockchain technology.

%\newpage

\bibliographystyle{ACM-Reference-Format}
\bibliography{
references
}

\end{document}